\newcommand{\CC}{\mathbb{C}}
\newcommand{\EE}{\mathbb{E}}
\newcommand{\RR}{\mathbb{R}}
\newcommand{\ZZ}{\mathbb{Z}}
\newcommand{\cA}{\mathcal{A}}
\newcommand{\cH}{\mathcal{H}}
\newcommand{\cW}{\mathcal{W}}
\newcommand{\cZ}{\mathcal{Z}}
\renewcommand{\a}{\alpha}
\newcommand{\D}{\Delta} 
\renewcommand{\d}{\delta} 
\newcommand{\G}{\Gamma}
\newcommand{\g}{\gamma}
\renewcommand{\b}{\beta} 
\newcommand{\Om}{\Omega}
\renewcommand{\S}{\Sigma} 
\newcommand{\s}{\sigma}
\newcommand{\eps}{\varepsilon}
\newcommand{\el}{\langle} 
\newcommand{\er}{\rangle}
\newcommand{\tr}{\mathrm{tr}}
\newcommand{\sd}{\triangle}
\newcommand{\fk}{{\sc fk}}
\newcommand{\tfim}{{\sc tfim}}
\newcommand{\lra}{\leftrightarrow}
\renewcommand{\b}{\beta}
\newcommand{\oo}{\infty}
\newcommand{\sm}{\setminus}
\newcommand{\es}{\varnothing}
\newcommand{\se}{\subseteq}
\newcommand{\ol}{\overline}
\newcommand{\one}{\hbox{\rm 1\kern-.27em I}}
\newcommand{\w}{\circ}
\renewcommand{\b}{\bullet}
\newcommand{\up}{\uparrow}
\newcommand{\dn}{\downarrow}
\newcommand{\lt}{\leftarrow}
\newcommand{\rt}{\rightarrow}
\newcommand{\Dom}{\Om}
\renewcommand{\Re}{\mathrm{Re}}
\renewcommand{\Im}{\mathrm{Im}}
\newcommand{\rv}{\mathrm{v}}
\newcommand{\rh}{\mathrm{h}}
\newcommand{\itr}{\mathrm{int}}
\newcommand{\be}{\begin{equation}}
\newcommand{\ee}{\end{equation}}
\def\upu{\lower 2mm \hbox{\includegraphics{fermionic.32}}}
\def\dnu{\lower 2mm \hbox{\includegraphics{fermionic.33}}}
\def\rtw{\hbox{\includegraphics{fermionic.34}}}
\def\ltw{\hbox{\includegraphics{fermionic.35}}}
\def\uprtw{\lower 2mm \hbox{\includegraphics{fermionic.36}}}
\def\dnrtw{\lower 2mm \hbox{\includegraphics{fermionic.37}}}
\def\upv{\lower 2mm \hbox{\includegraphics{fermionic.38}}}
\def\dnv{\lower 2mm \hbox{\includegraphics{fermionic.39}}}
\def\upltw{\lower 2mm \hbox{\includegraphics{fermionic.40}}}
\def\dnltw{\lower 2mm \hbox{\includegraphics{fermionic.41}}}
\newtheoremstyle{slthm}
     {}
     {\baselineskip}
     {\slshape}
     {\parindent}
     {\scshape}
     {.}
     { }
     {}
\theoremstyle{slthm}
\newtheorem{definition}{Definition}[section]
\newtheorem{theorem}[definition]{Theorem}
\newtheorem{proposition}[definition]{Proposition}
\newtheorem{lemma}[definition]{Lemma}
\newtheorem{remark}[definition]{Remark}
\title
{Fermionic observables in the transverse Ising  chain}
\author{Jakob E. Bj\"ornberg}
\thanks{Department of Mathematics,
University of Gothenburg, Sweden,
e-mail: jakob.bjornberg@gmail.com}
\date{\today}
\begin{document}
\maketitle

\begin{abstract}
We introduce a notion of s-holomorphicity suitable for
certain quantum spin systems in one dimension, and define two
observables in the critical transverse-field Ising model which have this
property.  The observables are defined using graphical
representations in the complex plane, and are analogous to 
Smirnov's \fk--Ising 
and spin-Ising observables, respectively.  
We also briefly discuss scaling-limits of these
observables.
\end{abstract}

\section{Introduction}

Recent years have seen tremendous progress on the understanding of
planar models in statistical physics, particularly the (classical)
Ising model at criticality.  A major breakthrough in this area was 
the definition, and proof of convergence to conformally covariant
scaling limits,
of fermionic observables in the critical Ising
model, first on the square lattice by Smirnov~\cite{smi-towards,smi-fk},
and later on all isoradial graphs  
by Chelkak and Smirnov~\cite{ch-sm}.

The fermionic observables enjoy a crucial property 
called s-holo{-}morphicity, a strong form of discrete analyticity.  
Besides satisfying a discrete version of the Cauchy--Riemann
relations, if a function $F_\d$ 
is s-holomorphic then
one may define a discrete primitive 
$H_\d=\Im\big(\int^\d F_\d^2\big)$ of its
\emph{square}. 
Moreover this function $H_\d$ is very close to being
(discrete) harmonic. 
When combined with control of the behaviour of $H_\d$ at the boundary
of the domain, this allows to deduce convergence of the
fermionic observables from convergence of  solutions
to discrete boundary-value problems.

The identification of these and related observables and their scaling limits has
subsequently led to some outstanding results on the
critical planar Ising model, settling several predictions from
conformal field theory.  This includes convergence of 
the energy-density~\cite{en-dens},
correlation functions~\cite{correlations}, as well as
interfaces to SLE-curves~\cite{interfaces} and
loops to CLE-processes~\cite{ben-hong,ke-sm}, to mention but a few.
There has also been work on extending the definition of
s-holomorphicity to general graphs~\cite{cimasoni}.

In this note we start to consider similar questions in the context of
one-dimensional quantum spin-systems, specifically the
transverse-field (quantum) Ising model, hereafter abbreviated \tfim.
This model has  Hamiltonian given by
\be\label{ham_eq}
-\cH_N=J\sum_{x=1}^{N-1} \s_x^{(3)}\s_{x+1}^{(3)}
+h\sum_{x=1}^N \s_x^{(1)},
\mbox{ acting on } \otimes_{x=1}^N \CC^2,
\ee
where 
$\s^{(3)}=\big(\begin{smallmatrix}
1 & 0 \\
0 & -1
\end{smallmatrix}\big)$ and
$\s^{(1)}=\big(\begin{smallmatrix} 
0 & 1 \\
1 & 0
\end{smallmatrix}\big)$
are the spin-$\tfrac12$ Pauli matrices,
and $J,h>0$ give the coupling- and transverse-field-strengths,
respectively.  (For $h=0$ this is just the classical Ising model.)
We will be working with the ground-state (zero
temperature), where the model is known to undergo a phase-transition
as the ratio $h/J$ is varied, at the critical point $h/J=1$~\cite{pfeuty}.
The phase-transition is continuous~\cite{bjogr}. 

It is well-known that the {\tfim}
in $d$ dimensions possesses a graphical, probabilistic
representation in $\ZZ^d\times\RR$, and it behaves in many ways like a
classical Ising model in $d+1$ dimensions, see e.g.\ the results
in~\cite{B-irb,B-van}.   One may thus ask if
the results mentioned above, on conformal invariance in the
two-dimensional classical Ising model at criticality,
have analogs in the one-dimensional
quantum model?

This note is a first step in this direction.  We introduce a notion of
s-holomorphicity for functions on 
$\ZZ+i\RR\se\CC$;  we
show that functions that satisfy this enjoy (analogs of) the key
properties that hold in the classical case;
and we define two observables  in the critical
 {\tfim} which we show to be
s-holomorphic.   

The graphical representations 
that we consider may be obtained as
limits of classical counterparts on $\ZZ+i(\eps\ZZ)$ as
$\eps\to0$.  The latter graphs are all isoradial, and some of the key
quantities we work with can be interpreted as limits of the
corresponding quantities for isoradial graphs~\cite{ch-sm}.
We give examples of this 
in Section~\ref{isorad-sec}.  
However, for all our definitions and results we work
directly in the `continuous' setting $\ZZ+i\RR$ and the
rescaled version $\d\ZZ+i\RR$.

We do not go into the details for scaling limits (as $\d\to0$)
of our  observables here, but we expect this
to be very similar to the classical case.  As we discuss in
Section~\ref{disc-sec}, we expect
analogous reasoning and estimates to show
that our
observables converge to the \emph{same} scaling-limits as their
classical  counterparts.

\subsection*{Outline and main contributions}
After reviewing the graphical
representations of the {\tfim}  in Section~\ref{graph-sec}, we
 give our definition of s-holomorphicity in Section~\ref{s-hol-sec},
and prove some key properties of s-holomorphic functions in
Proposition~\ref{H-prop}.  
We introduce and study our two fermionic observables in
Sections~\ref{fk-obs-sec} and~\ref{spin-obs-sec}, respectively.  The
main results 
are that these observables satisfy our definition of s-holomorphicity,
stated precisely in Theorems~\ref{s-hol-thm-fk}
and~\ref{s-hol-thm-spin}. 

\subsection*{Bibliographical remark}
Shortly after this paper was made public, Li~\cite{li}
announced a complete proof of convergence of
the {\fk}-observable considered here, as well as the
{\fk}-interface to SLE$_{16/3}$, in the scaling limit.
Li independently arrived at equivalent definitions of the
{\fk}-observable and s-holomorphicity as presented here,
and supplied the details necessary to prove convergence.
He does not consider the spin-observable.  Most likely
his results are useful for proving convergence of that
observable as well.

\section{Graphical representations of the TFIM}
\label{graph-sec}

We briefly review three graphical representations of the {\tfim}.
They may be obtained using a Lie--Trotter expansion, see 
e.g.~\cite{akn,bjo_phd,bjogr,ioffe_geom} for details.
We also present a version of the Kramers--Wannier duality;  as for the
classical case, this allows us to easily identify the critical parameters
of the model (but for rigorous proofs
see~\cite{pfeuty,bjogr}).  

We write the partition function
$\cZ_{N,\beta}=\cZ_{N,\beta}(h,J)=\tr(e^{-\beta \cH_N})$ 
where $\cH_N$ is the Hamiltonian~\eqref{ham_eq}, and 
$\beta>0$ is the inverse-temperature.  For illustration we will also
consider the two-point correlation 
\[
\el\s^{(3)}_x\s^{(3)}_y\er_{N,\beta}=
\tr(\s^{(3)}_x\s^{(3)}_ye^{-\beta \cH_N})/\cZ_{N,\beta}.
\]
Thermodynamic limits are obtained for $N\to\oo$, and
the ground-state is obtained by also letting
$\beta\to\oo$.

The {\tfim} on $\{1,\dotsc,N\}$ 
maps onto stochastic models in the rectangular domain 
$\Dom=[1,N]+i[0,\beta]\se\CC$.  We write 
\be
\Dom^\b=\{1,\dotsc,N\}+i[0,\beta],\quad
\Dom^\w=(\sfrac12+\{1,\dotsc,N-1\})+i[0,\beta].
\ee
We will let $\xi^\b$ and $\xi^\w$ denote independent Poisson processes
on $\Dom^\b$ and $\Dom^\w$, respectively.  Their respective rates
will be denoted $r^\b$ and $r^\w$ and will be functions of $h$ and
$J$.  We 
write $\EE_{r^\b,r^\w}[\cdot]$ for the law (expectation operator)
governing them, and $\xi=\xi^\b\cup\xi^\w$.  Elements of $\xi^\b$ will
be represented graphically by $\times$ and called `cuts';  an element
$(x+\sfrac12)+it$ of $\xi^\w$ will be represented as a horizontal
line-segment between $x+it$ and $(x+1)+it$ and called a `bridge'.  
The interpretation of these objects will
differ slightly for the three different representations, as we now
describe.  See
Figures~\ref{fk-rep-fig} and~\ref{kw_fig} for examples.

\subsection{FK-representation}
\label{fk-sec}

For this representation we set $r^\b=h$ and $r^\w=2J$.  We interpret
the cuts $x+it\in\xi^\b$ as \emph{severing} a line-segment 
$x+i[0,\beta]$, and the
bridges $\xi^\w$ as \emph{connecting} neighbouring line segments.  Thus the
configuration $\xi$ is a  partly continuous
percolation-configuration.  The 
maximal connected subsets of $\Dom^\b$ are called \emph{components},
and their number is denoted $k^\b(\xi)$.  The components may be defined
with respect to various different boundary conditions, 
but for now we only consider the
`vertically periodic' boundary condition, meaning that the points at
the top  and bottom of $\Dom^\b$ are identified (i.e.\ we treat
$[0,\beta]$ as a circle).  
See Figure~\ref{fk-rep-fig}.

The \fk-representation expresses
\be\label{fk}
\cZ_{N,\beta}=e^{\beta J(N-1)} \EE_{h,2J}[2^{k^\b(\xi)}],
\quad
\el\s^{(3)}_x\s^{(3)}_y\er_{N,\beta}=
\frac{\EE_{h,2J}[\one\{x\lra y\}2^{k^\b(\xi)}]}
   {\EE_{h,2J}[2^{k^\b(\xi)}]},
\ee
where $\{x\lra y\}$ denotes the event that $x,y\in\{1,\dotsc,N\}$
belong to the same connected component.

\begin{figure}[hbt]
\centering
\includegraphics{fermionic.44}\hspace{2cm}
\includegraphics{fermionic.45}
\caption{
\emph{Left:}  Illustration of the \fk-representation.  Cuts ($\times$)
disconnect, bridges (horizontal line segments) connect, and top and
bottom of the intervals are identified.  The number $k(\xi)$ of
components is 5.
\emph{Right:}  The same \fk-sample $\xi$ (solid) with its dual $\xi'$ 
(dashed).  
}
\label{fk-rep-fig}
\end{figure}

With an \fk-configuration $\xi$ we can associate a \emph{dual}
configuration $\xi'$, whose connected components are subsets of
$\Dom^\w$ rather than $\Dom^\b$.  For simplicity we describe this in
the case when $\xi^\b$ has no cuts on the left- or rightmost intervals
$1+i[0,\beta]$ and $N+i[0,\beta]$.  We obtain $\xi'$ by drawing a
bridge from $(x-\sfrac12)+it$ to $(x+\sfrac12)+it$ for each cut
$x+it\in\xi^\b$, and placing a cut $\times$ at $(x+\sfrac12)+it$
whenever $\xi^\w$ has a bridge there.   See Figure~\ref{fk-rep-fig}. 
Objects, such as cuts, bridges and components, pertaining to $\xi'$
will be referred to as \emph{dual} and those of $\xi$ as \emph{primal}
when a distinction needs to be made.  The number of dual components
will be denoted $k^\w(\xi)$.
It turns out that $\xi'$ also has the law of a \fk-configuration, with
adjusted parameters.
We will return to this construction when we define the \fk-observable
in Section~\ref{fk-obs-sec}.

\subsection{Random-parity representation}

For this representation we set $r^\b=0$ and $r^\w=J$, thus there are 
only bridges.  We use auxiliary configurations $\psi\in\{0,1\}^N$
together with a fixed, finite subset $A\se\Dom^\b$ of 
\emph{sources}.  The
configuration $\psi$ is extended to a function
$\psi_A:\Dom^\b\to\{0,1\}$, 
in a way  which depends on $\xi^\w$ and $A$, using the following
rules. 
The function $\psi_A(x+it)$ is equal to $\psi(x)$ for $t$ from 0 to the
first time of either a bridge $(x\pm\sfrac12)+it\in\xi^\w$, 
\emph{or}  a source $x+it\in A$.  At such a point it switches
to $1-\psi(x)$.  Then it stays at that value until it encounters
another bridge-endpoint or source, where it switches back to
$\psi(x)$; and so 
on.  See Figure~\ref{kw_fig} for an example.  

The subset of $\Dom^\b$ where $\psi_A$ takes value 1 is denoted
$I(\psi_A)=\psi_A^{-1}(1)$, and will for definiteness be taken to be
closed.   We denote its total length  $|I(\psi_A)|$.  We will only be
considering the cases when either $A=\es$ or $A$ consists of two
points;  in the former case $I(\psi_A)$ consists of a collection of
loops, in the latter case loops plus a unique path connecting the two
points of $A$.

We impose  the periodicity constraint that
$\psi(x+i\beta)=\psi(x)$ for all $x\in\{1,\dotsc,N\}$;  
if $x\in A$ then the correct
interpretation is $\psi(x+i\beta)=1-\psi(x)$ due to the
switching-rule.  
Hence we discount some
configurations $\xi$, specifically those where some line
$x+i[0,\beta]$ meets an odd number of switching-points. 
As we will see presently, this discounting  can be done
formally by redefining
$|I(\psi_A)|=\oo$ when the constraint is violated.

The random-parity representation expresses
\be\begin{split}\label{rpr}
&\cZ_{N,\beta}=e^{\beta h N+\beta J(N-1)}
\EE_{0,J}\Big[\sum_{\psi\in\{0,1\}^N}\exp(-2h|I(\psi_\es)|)\Big],\\
&\el\s^{(3)}_x\s^{(3)}_y\er_{N,\beta}=\frac{
\EE_{0,J}\Big[\sum_{\psi\in\{0,1\}^N}\exp(-2h|I(\psi_{\{x,y\}})|)\Big]
}
{\EE_{0,J}\Big[\sum_{\psi\in\{0,1\}^N}\exp(-2h|I(\psi_\es)|)\Big]}.
\end{split}\ee
This representation is a quantum version of Aizenman's
random-current representation~\cite{aiz82}.
There is a notion of planar duality also for this representation, 
mapping onto the space--time spin representation, which we describe
now.  

\subsection{Space--time spin representation}

This representation plays a less prominent role in this note, 
and is mainly interesting since it is dual to the random-parity
representation.  
We now set $r^\b=h$ and $r^\w=0$, thus there
are only cuts.  We let $\S(\xi)$ denote the set of functions
$\s:\Dom^\b\to\{-1,+1\}$ which are constant between points of $\xi^\b$,
change value at the points of $\xi^\b$, and satisfy the periodicity
constraint $\s(x)=\s(x+i\beta)$ for all $x\in\{1,\dotsc,N\}$.  
See Figure~\ref{kw_fig}.
(For definiteness we may take $\s^{-1}(+1)$ to be closed;  also note
that for some $\xi$ we have $\S(\xi)=\es$.)

For readability we also write $\s_x(t)$ for $\s(x+it)$.
The space--time spin representation expresses
\be\label{stim}
\begin{split}
&\cZ_{N,\beta}=e^{\beta h N}
\EE_{h,0}\Big[
\sum_{\s\in\S(\xi)}\exp\Big( J\sum_{z=1}^{N-1}\int_0^\beta
\s_z(t)\s_{z+1}(t)\, dt\Big)\Big],\\
&\el\s^{(3)}_x\s^{(3)}_y\er_{N,\beta}=\frac{
\EE_{h,0}\Big[\sum_{\s\in\S(\xi)}
\s(x)\s(y)\exp\Big( J\sum_{z=1}^{N-1}\int_0^\beta
\s_z(t)\s_{z+1}(t)\, dt\Big)\Big]
}{
\EE_{h,0}\Big[
\sum_{\s\in\S(\xi)}\exp\Big( J\sum_{z=1}^{N-1}\int_0^\beta
\s_z(t)\s_{z+1}(t)\, dt\Big)\Big]}.
\end{split}
\ee

\subsection{Kramers--Wannier duality}

We now describe a duality between the random-parity and
spin-representations.  We will associate (in a reversible way) to a
spin-configuration $\s:\Dom^\b\to\{-1,+1\}$ a `dual'
random-parity-configuration 
$\psi=\psi_\es:\Dom^\w\to\{0,1\}$.  Note that the domain of  $\psi$ is
$\Dom^\w$ rather than $\Dom^\b$.  
We impose the `wired' boundary condition 
\be\begin{split}
\s(1+it)=\s(N+it)=\s(x)&=\s(x+i\beta)=+1,\\
&\mbox{for all } t\in[0,\beta], x\in\{1,\dotsc,N\}.
\end{split}\ee
As we will see, this will automatically lead to the boundary condition 
\be
\psi(x+\sfrac12)=\psi((x+\sfrac12)+i\beta)=0,
\mbox{ for all } x\in\{1,\dotsc,N\}.
\ee
Subject to the boundary conditions, 
the sums over $\s$ in~\eqref{stim}
and $\psi$ in~\eqref{rpr}
contribute with at most one nonzero term each, hence they will not be
written out. 

\begin{figure}[hbt]
\centering
\includegraphics{fermionic.46}\hspace{2cm}
\includegraphics{fermionic.4}
\caption{
\emph{Left:}  Sample of the random-parity representation with source
set $A=\{a,b\}$.  
Intervals where $\psi=1$ are drawn bold, with red
for the unique path between $a$ and $b$ and blue for the loops.
\emph{Right:} Duality between the space--time spin and random-parity
  representations.  Values $+$ and $-$ indicate the value of $\s(z)$
on the corresponding interval in $\Dom^\b$, 
and these values flip at cuts $\times$.  
Blue vertical intervals mark where $\psi(z)=1$.}
\label{kw_fig}
\end{figure}

We construct $\psi$ from $\s$ as follows, see Figure~\ref{kw_fig}. 
If two neighbouring points $x+it$ and
$(x+1)+it$ have the same spin-value, 
$\s(x+it)=\s((x+1)+it)$, then we set
$\psi((x+\sfrac12)+it)=0$; 
otherwise if   $\s(x+it)\neq\s((x+1)+it)$,
then we set $\psi((x+\sfrac12)+it)=1$.
If $x+it\in \xi^\b$ is a
point of spin-flip for $\s$, we draw a bridge between
$(x-\sfrac12)+it$ 
and $(x+\sfrac12)+it$.  Thus the bridges form a Poisson
process of  rate $h$.  

Writing $\cZ^+_{N,\beta}(h,J)$ for the partition function~\eqref{stim}
associated with the spin-configurations, 
we have that
\be\begin{split}
\cZ^+_{N,\beta}(h,J)
&=e^{\beta h N}\EE_{h,0}\Big[
\exp\Big(J\sum_{x=1}^{N-1}\int_0^\beta
\s_x(t)\s_{x+1}(t)\, dt\Big)\Big]\\
&=e^{\beta h N}\EE_{h,0}\Big[
\exp\Big(J\sum_{x=1}^{N-1}\int_0^\beta
[1-2\psi((x+\sfrac12)+it)]\, dt\Big)\Big]\\
&= e^{\beta h N+\beta J (N-1)}
\EE_{h,0}\big[\exp(-2J |I(\psi)|)\big].
\end{split}\ee
Comparing with~\eqref{rpr}, we see that the last factor
\be
\EE_{h,0}\big[\exp(-2J |I(\psi)|)\big]=
e^{-\beta J(N-1)-\beta h (N-2)} 
\cZ^{0}_{N-1,\beta}(J,h),
\ee
where $\cZ^{0}_{N-1,\beta}(J,h)$ is the partition function associated with
the $\psi$:s with the prescribed boundary condition.  Note that the
order of the parameters $h,J$ is swapped.

We conclude that
\be
\cZ^+_{N,\beta}(h,J)=e^{2\beta h} \cZ^0_{N-1,\beta}(J,h).
\ee
Assuming (as can be justified) the existence of the limit as well as its
independence of the boundary condition, we deduce that the free energy 
$f(h,J)=\lim_{N,\beta\to\oo} \tfrac{1}{\beta N} \log \cZ_{N,\beta}(h,J)$
satisfies $f(h,J)=f(J,h)$.
This symmetry is consistent with a phase-transition at
$h=J$.  
In the rest of this note we consider only the critical case, 
$h=J$.

\section{S-holomorphic functions}
\label{s-hol-sec}

\subsection{Discrete domains}
\label{dom-sec}

As indicated above, we will be considering functions on
(bounded subsets of) $\d\ZZ+i\RR\se\CC$.  
We use the notation
\[
\CC_\d^\b=\d \ZZ+ i\RR,\quad 
\CC_\d^\w=\CC_\d^\b+\sfrac\d2, \quad \mbox{and}\quad 
\CC^\diamondsuit_\d=(\CC^\b_\d\cup\CC^\w_\d)+\sfrac\d4.
\]
We will sometimes refer to points of $\CC^\b_\d$ as
\emph{primal} or \emph{black}, points of $\CC^\w_\d$ as
\emph{dual} or \emph{white}, and points of 
$\CC^\diamondsuit_\d$ as \emph{medial}.
See Figure~\ref{dom-fig}
for illustrations of the definitions that follow.

Let $\partial_\d:[0,1]\to\CC$ be
a simple closed rectangular path, consisting of
vertical and horizontal line segments,  whose
vertical segments are restricted to $\CC_\d^\b$.
Let $\Dom_\d$ denote the bounded component of
$\CC\sm\partial_\d[0,1]$.  
Such a domain $\Dom_\d$ will be referred to as
a \emph{primal (discrete) domain}.  We also write,
for $\ast\in\{\b,\w\}$,
\be\begin{split}\label{doms-not}
\Dom_\d^\ast=\ol{\Dom_\d}\cap\CC^\ast_\d,\quad
\partial\Dom_\d^\ast=\Dom_\d^\ast\cap\partial\Dom_\d,\quad
\Dom_\d^{\ast,\itr}=\Dom_\d^\ast\sm\partial\Dom_\d^\ast.
\end{split}\ee
Note that $\Dom_\d^\ast$ consists of a collection of
vertical line segments, and $\partial\Dom_\d^\ast$ of
vertical line segments together with a finite number
of points (forming the hortizontal part of the boundary).
We similarly define a \emph{dual (discrete) domain} $\Dom_\d$
by shifting the above definition by
$\sfrac\d2$ (thus swapping $\CC^\b_\d$ and $\CC^\w_\d$).

We will also consider Dobrushin domains.  For this we let 
$a_\d,b_\d\in\CC_\d^\diamondsuit$ be two distinct medial points, and
let $\partial_\d:[0,1]\to\CC$ be a simple closed 
\emph{positively  oriented} rectangular path, satisfying
\[
\partial_\d(0)=\partial_\d(1)=a_\d,\quad
\partial_\d(\sfrac12)=b_\d.
\]
We define $\partial_\d^\b,\partial_\d^\w:[0,1]\to\CC$ by
\[
\partial_\d^\w(t)=\partial_\d(t/2),\quad 
\partial_\d^\b(t)=\partial_\d(1-t/2),\quad t\in[0,1].
\]
Thus $\partial_\d^\w$ goes from $a_\d$ to $b_\d$ in the
counter-clockwise direction, and 
$\partial_\d^\b$ goes from $a_\d$ to $b_\d$ in the
clockwise direction.
Finally we assume that the vertical segments of 
$\partial_\d^\b$ and $\partial_\d^\w$ belong to 
$\CC^\b_\d$ and $\CC^\w_\d$, respectively.
Again we write $\Dom_\d$ for the bounded component of 
$\CC\sm\partial_\d[0,1]$,  and we refer to the triple 
$(\Dom_\d,a_\d,b_\d)$ as a \emph{discrete Dobrushin domain}.
We define $\Dom_\d^\b$, $\partial\Dom_\d^\b$, 
$\Dom_\d^{\b,\itr}$, as well as
$\Dom_\d^\w$, $\partial\Dom_\d^\w$, 
$\Dom_\d^{\w,\itr}$, as in~\eqref{doms-not}.

\begin{figure}[hbt]
\centering
\includegraphics{fermionic.1}\hspace{1cm}
\includegraphics{fermionic.2}

\caption{
\emph{Left:} 
A primal domain $\Dom_\d$.  The boundary is drawn with solid
  black lines, while $\Dom_\d^\b$ consists of the solid black and gray
  vertical lines and $\Dom_\d^\w$ of the dashed gray vertical lines.
\emph{Right:} A Dobrushin domain $\Dom_\d$ with $\partial^\b_\d$ drawn
  solid and $\partial^\w_\d$ dashed.
}
\label{dom-fig}
\end{figure}

For a primal, dual or Dobrushin domain
$\Dom_\d$, and $\ast\in\{\b,\w\}$, 
we define the \emph{vertical} 
and \emph{horizontal} parts of the boundary $\partial\Dom_\d^\ast$ by 
\be\begin{split}
\partial^\rv\Dom^\ast_\d&=\{z\in\partial\Dom^\ast_\d:
z+\eps\not\in\Dom_\d\mbox{ or } z-\eps\not\in\Dom_\d
\mbox{ for small enough }\eps>0\},\\
\partial^\rh\Dom^\ast_\d&=\{z\in\partial\Dom^\ast_\d:
z+ i\eps\not\in\Dom_\d\mbox{ or }
z- i\eps\not\in\Dom_\d\mbox{ for small enough }\eps>0\}.
\end{split}\ee
We also let 
$\partial^\rv\Dom_\d=\partial^\rv\Dom^\b_\d\cup \partial^\rv\Dom^\w_\d$
and
$\partial^\rh\Dom_\d=\partial^\rh\Dom^\b_\d\cup \partial^\rh\Dom^\w_\d$.
In words, $\partial^\rv\Dom_\d$ consists of the vertical 
segments of $\partial\Dom_\d$, and $\partial^\rh\Dom_\d$ of the
endpoints of segments in $\ol{\Dom_\d}$.
We finally make the assumption on $\Dom_\d$ that if
$z\in\partial^\rv\Dom_\d$ then at least one of $z\pm\sfrac\d2$ belongs
to the interior $\Dom_\d^{\b,\itr}\cup\Dom_\d^{\w,\itr}$.

In what follows we will consider triples $(\Dom_\d,a_\d,b_\d)$
which are either discrete Dobrushin domains,
alternatively discrete 
primal or dual domains with two marked points 
$a_\d,b_\d\in\partial\Dom_\d$.
One may think of these as approximating 
a simply connected domain $\Dom\se\CC$ with 
two marked points $a,b$ on its boundary.

\subsection{S-holomorphic functions}

Let $\Dom_\d$ be a discrete domain, as above, and
$F:\Dom_\d\to\CC$ a function.  We will be using the notation 
\be
\dot F(z):=\lim_{\eps\to 0}\frac{F(z+i\eps)-F(z)}{\eps},
\quad \mbox{with }\eps\in\RR,
\ee
for the derivative of $F$ in the `vertical' direction, when it exists.
We similarly write $\ddot F(z)$ for the second derivative.

For a complex number $\zeta$, with $|\zeta|=1$, 
and $z\in\CC$, we write 
\be\label{proj-def}
\mathrm{Proj}[z;\zeta]=
\mathrm{Proj}[z;\zeta\RR]=\tfrac12(z+\ol z \zeta^2)
\ee
for the projection of $z$ onto (the straight line through 0 and)
$\zeta$.  The cases when $\zeta=e^{\pm i\pi/4}$ will be particularly
important in what follows, and 
we will write $\ell(\up)=e^{-i\pi/4}\RR$ and
$\ell(\dn)=e^{i\pi/4}\RR$.  (This choice of notation will be motivated
below, in the context of the \fk-observable).
We define 
\be\label{up-dn-notation}
F^\up(z)=\mathrm{Proj}[F(z);\ell(\up)],\quad
F^\dn(z)=\mathrm{Proj}[F(z);\ell(\dn)].
\ee
Note that $F(z)=F^\up(z)+F^\dn(z)$ since 
$\ell(\up)\perp\ell(\dn)$.

\begin{figure}[hbt]
\centering
\includegraphics{fermionic.42}\hspace{2cm}
\includegraphics{fermionic.43}
\caption{
\emph{Left:}  The lines $\ell(\up)$ and $\ell(\dn)$.
\emph{Right:}  Illustration of conditions~\eqref{s-hol-def-w-1}
and~\eqref{s-hol-def-u-1} in Definition~\ref{s-hol-def}.  For a pair
of adjacent black and white points, 
separated by an arrow in direction $\a\in\{\up,\dn\}$,
the projections of $F$ onto $\ell(\a)$ are the same. 
}
\label{s-hol-fig}
\end{figure}

\begin{definition}[s-holomorphic]\label{s-hol-def}
A function $F:\Dom^\b_\d\cup\Dom^\w_\d\to\CC$ is s-holomorphic at a
point  $w\in\Dom_\d^{\w,\itr}$ if the following hold:
\be\label{s-hol-def-w-1}
F^\up(w)=F^\up(w-\sfrac\d2),\quad
F^\dn(w)=F^\dn(w+\sfrac\d2),\quad\mbox{and}
\ee
\be\label{s-hol-def-w-2}
\begin{split}
\dot F^\up(w)&=\tfrac i\d 
\big(F^\dn(w+\sfrac\d2)-F^\dn(w-\sfrac\d2)\big),\\
\dot F^\dn(w)&=\tfrac i\d 
\big(F^\up(w+\sfrac\d2)-F^\up(w-\sfrac\d2)\big).
\end{split}
\ee
It is s-holomorphic at a point
$u\in\Dom_\d^{\b,\itr}$ if the following hold:
\be\label{s-hol-def-u-1}
F^\up(u)=F^\up(u+\sfrac\d2),\quad
F^\dn(u)=F^\dn(u-\sfrac\d2),\quad\mbox{and}
\ee
\be\label{s-hol-def-u-2}
\begin{split}
\dot F^\up(u)&=\tfrac i\d 
\big(F^\dn(u+\sfrac\d2)-F^\dn(u-\sfrac\d2)\big),\\
\dot F^\dn(u)&=\tfrac i\d 
\big(F^\up(u+\sfrac\d2)-F^\up(u-\sfrac\d2)\big).
\end{split}
\ee
If $F$ is s-holomorphic at every point 
$z\in\Dom_\d^{\b,\itr}\cup\Dom_\d^{\w,\itr}$ then we simply say that
$F$ is s-holomorphic in $\Dom_\d$.  
\end{definition}

The choice of the term s-holomorphic is mainly motivated by
Proposition~\ref{H-prop} below, which is completely analogous to the
classical case (e.g.\ Proposition~3.6 of~\cite{ch-sm}).

It is easy to see that a function $F$ which is s-holomorphic at a point
$z\in\Dom_\d^{\b,\itr}\cup\Dom_\d^{\w,\itr}$
satisfies the following natural preholomorphicity condition:
\be\label{prehol}
\tfrac1\d\big(F(z+\sfrac\d2)-F(z-\sfrac\d2)\big)
+i\dot F(z)=0.
\ee
However, as for the classical case, the main benefit of s-holomorphic
functions $F$ is that they have well-behaved discrete analogs of 
$\Im\big(\int F^2\big)$.  
In the next result we write $\D_\d$ for the appropriate Laplacian
operator given by
\be\label{lap-def}
[\D_\d f](z)=\ddot f(z)
+\tfrac1{\d^2}\big(f(z+\d)+f(z-\d)-2f(z)\big).
\ee
We say that a function $h$ is $\D_\d$-harmonic (respectively,
$\D_\d$-sub- or $\D_\d$-super-harmonic) at a point
$z\in\CC_\d^\b\cup\CC_\d^\w$ if  
$[\D_\d h](z)=0$ (respectively, $[\D_\d h](z)\geq0$ or $[\D_\d h](z)\leq0$).

\begin{proposition}\label{H-prop}
Let $F$ be s-holomorphic in $\Dom_\d$.  Then there
is a function 
$H:\Dom_\d^\b\cup\Dom_\d^\w\to\RR$, unique
up to an additive constant, satisfying the following.
Firstly, for $z$ s.t.\ $[z,z+\sfrac\d2]\se\Dom_\d$,
\be\label{discr-diff-eq}
H(z+\sfrac\d2)-H(z)=
\left\{
\begin{array}{ll}
+|F^\dn(z)|^2, & \mbox{if } z\in\Dom_\d^\w,\\
-|F^\up(z)|^2, & \mbox{if } z\in\Dom_\d^\b,
\end{array}
\right.
\ee
and secondly, for any $z\in\Dom_\d^\b\cup\Dom_\d^\w$,
\be\label{der-eq}
\dot H(z)=\tfrac2\d F^\up(z) F^\dn(z).
\ee
Moreover, we have for all $u\in\Dom_\d^{\b,\itr}$ and 
$w\in\Dom_\d^{\w,\itr}$ that
\be\label{H-sub-sup}
[\D_\d H](u)=|\dot F(u)|^2
\quad\mbox{and}\quad
[\D_\d H](w)=-|\dot F(w)|^2.
\ee
Hence 
$H$ is $\D_\d$-sub-harmonic in $\Dom_\d^{\b,\itr}$ and 
$\D_\d$-super-harmonic in $\Dom_\d^{\w,\itr}$.
\end{proposition}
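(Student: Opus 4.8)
The plan is to mimic the classical construction from~\cite{ch-sm} in the continuous vertical direction. First I would \emph{define} $H$ by specifying its increments. Along each vertical line segment of $\Dom_\d^\b$ or $\Dom_\d^\w$, equation~\eqref{der-eq} prescribes $\dot H$, so I integrate: set $H$ equal to some constant at one point of each segment, then let $H(z')-H(z)=\tfrac2\d\int F^\up F^\dn$ along the segment. Between a black segment and an adjacent white segment, equation~\eqref{discr-diff-eq} prescribes the horizontal increment. The content is that these prescriptions are consistent, i.e.\ that going around any `plaquette' (two adjacent vertical segments joined by two short horizontal steps, or more precisely any closed loop in the segment-graph) the increments sum to zero. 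This is where s-holomorphicity enters. The key local computation is this: for a white point $w\in\Dom_\d^{\w,\itr}$ and its two black neighbours $w\pm\sfrac\d2$, one checks that
\[
\frac{d}{dt}\Big(-|F^\up(w+\sfrac\d2)|^2+|F^\dn(w)|^2\Big)
= \frac{d}{dt}\Big(-|F^\up(w-\sfrac\d2)|^2+|F^\dn(w)|^2\Big)+\big(\text{terms that cancel}\big),
\]
where the cancellation is forced by~\eqref{s-hol-def-w-1}–\eqref{s-hol-def-w-2}. Concretely: $|F^\dn(w)|^2$ has vertical derivative $2\,\Re(\ol{F^\dn(w)}\,\dot F^\dn(w))$; substituting $\dot F^\dn(w)=\tfrac i\d(F^\up(w+\sfrac\d2)-F^\up(w-\sfrac\d2))$ from~\eqref{s-hol-def-w-2} and using $F^\up(w\pm\sfrac\d2)=F^\up(w)\mp(\text{projection-matched pieces})$ together with the orthogonality $\ell(\up)\perp\ell(\dn)$, the cross-terms reorganise exactly into the vertical derivatives of $|F^\up(w\pm\sfrac\d2)|^2$ coming from~\eqref{s-hol-def-u-2} applied at the black points. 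Doing the analogous computation around a black point $u$ with~\eqref{s-hol-def-u-1}–\eqref{s-hol-def-u-2} closes the loop. This gives existence and (up to the additive constant) uniqueness of $H$.

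Next I would verify~\eqref{H-sub-sup}. Fix $u\in\Dom_\d^{\b,\itr}$. By~\eqref{lap-def},
\[
[\D_\d H](u)=\ddot H(u)+\tfrac1{\d^2}\big(H(u+\d)-H(u)\big)+\tfrac1{\d^2}\big(H(u-\d)-H(u)\big).
\]
For the horizontal part, write $H(u\pm\d)-H(u)$ as the sum of two steps: $u\to u\pm\sfrac\d2$ (a black-to-white step, governed by $-|F^\up(u)|^2$ from~\eqref{discr-diff-eq}) and $u\pm\sfrac\d2\to u\pm\d$ (a white-to-black step, governed by $+|F^\dn(u\pm\sfrac\d2)|^2$). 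So the horizontal contribution is $\tfrac1{\d^2}\big(|F^\dn(u+\sfrac\d2)|^2+|F^\dn(u-\sfrac\d2)|^2-2|F^\up(u)|^2\big)$. For the vertical part, differentiate~\eqref{der-eq}: $\ddot H(u)=\tfrac2\d\big(\dot F^\up(u)F^\dn(u)+F^\up(u)\dot F^\dn(u)\big)$, and plug in~\eqref{s-hol-def-u-2}. Now expand $|F^\dn(u\pm\sfrac\d2)|^2$ using~\eqref{s-hol-def-u-1} (which gives $F^\dn(u)=F^\dn(u-\sfrac\d2)$, so one of the two is free) and the Cauchy–Riemann-type relation~\eqref{prehol}; after collecting terms, everything should telescope into $+|\dot F(u)|^2$, using $\dot F=\dot F^\up+\dot F^\dn$ and $|\dot F|^2=|\dot F^\up|^2+|\dot F^\dn|^2$ by orthogonality. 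The white-point identity at $w$ is identical with the roles of $\up/\dn$ and the overall sign swapped, giving $-|\dot F(w)|^2$. Sub-/super-harmonicity is then immediate since $|\dot F|^2\ge0$.

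The main obstacle I anticipate is purely bookkeeping: in the classical isoradial setting these identities are algebraic identities among four complex numbers at the corners of a rhombus, whereas here the `horizontal' direction is discrete but the `vertical' direction is continuous, so the same identities appear in a differentiated form and one must be careful that the projection operators $\proj[\,\cdot\,;\ell(\up)],\proj[\,\cdot\,;\ell(\dn)]$ commute with $\tfrac{d}{dt}$ (they do, being $\RR$-linear and constant-coefficient) and that the boundary/endpoint terms from integrating $\dot H$ along finite segments are handled — but since~\eqref{discr-diff-eq} only concerns $z$ with $[z,z+\sfrac\d2]\se\Dom_\d$ and~\eqref{H-sub-sup} only interior points, no genuine boundary subtlety arises. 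Everything reduces to the single plaquette-consistency computation, done once at a white interior point and once at a black interior point.
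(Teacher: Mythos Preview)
Your overall architecture is the same as the paper's: define $H$ by its increments, verify consistency around elementary contours using s-holomorphicity, then compute $\ddot H$ from~\eqref{der-eq} and the horizontal second difference from~\eqref{discr-diff-eq} and combine. So there is no genuinely different route here.

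There is, however, a concrete bookkeeping error in your Laplacian computation that you should fix before it propagates. The rule~\eqref{discr-diff-eq} is \emph{directional}: it always gives $H(z+\tfrac\d2)-H(z)$, with the increment determined by the \emph{left} endpoint $z$. Thus the leftward step from $u$ to $u-\tfrac\d2$ is governed by the white point $u-\tfrac\d2$, giving $H(u-\tfrac\d2)-H(u)=-|F^\dn(u-\tfrac\d2)|^2$, not $-|F^\up(u)|^2$; and the step $u-\tfrac\d2\to u-\d$ is governed by the black point $u-\d$, giving $H(u-\d)-H(u-\tfrac\d2)=+|F^\up(u-\d)|^2=+|F^\up(u-\tfrac\d2)|^2$. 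So the correct horizontal contribution is
\[
H(u+\d)+H(u-\d)-2H(u)=|F^\dn(u+\tfrac\d2)|^2-|F^\up(u+\tfrac\d2)|^2+|F^\up(u-\tfrac\d2)|^2-|F^\dn(u-\tfrac\d2)|^2,
\]
not the symmetric expression you wrote. With your formula the terms will not collapse to $|\dot F(u)|^2$.

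One stylistic point that makes the paper's calculation cleaner than the $2\,\Re(\overline{F^\dn}\dot F^\dn)$ route you sketched for well-definedness: because $F^\up\in e^{-i\pi/4}\RR$ and $F^\dn\in e^{i\pi/4}\RR$, one has the exact identities $|F^\up|^2=i(F^\up)^2$ and $|F^\dn|^2=-i(F^\dn)^2$. Using these, the vertical integral of $\dot H$ along a black--white pair of segments becomes $\tfrac1i\int 2F^\dn\dot F^\dn\,dt=\tfrac1i\big((F^\dn)^2\big|_{t_1}^{t_2}\big)$, which matches the horizontal increments directly; the same identities turn the Laplacian combination into $i\big([F^\up(u-\tfrac\d2)-F^\up(u+\tfrac\d2)]^2-[F^\dn(u-\tfrac\d2)-F^\dn(u+\tfrac\d2)]^2\big)$, which one then recognises as $|F(u-\tfrac\d2)-F(u+\tfrac\d2)|^2=\d^2|\dot F(u)|^2$ by orthogonality. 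You may find this tidier than chasing real parts.
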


\begin{remark}
The function $\d H(z)$ is a discrete analog 
of $\Im\big(\int^z F^2\big)$.
Indeed, since
$2F^\up(z)F^\dn(z)=\Re[F(z)^2],$ we see that
if $u,u'\in\Dom_\d^\b$ with 
$u=x+iy$ and $u'=x+iy'$ for some $x\in\d\ZZ$ and $y,y'\in\RR$ such
that $[u,u']\se\Dom_\d^\b$, then
\be
\begin{split}
H(u')-H(u)&=\int_{y}^{y'} \dot H(x+it) \,dt
=\frac1\d\int_y^{y'} \Re[F(x+it)^2] \,dt\\
&=\frac1\d\int_y^{y'} \Im[iF(x+it)^2] \,dt
=\tfrac1\d\Im\Big[\int_u^{u'} F(z)^2 \,dz\Big].
\end{split}
\ee
Similarly, if $v=u+\d\in\Dom_\d^\b$ 
and $w=u+\sfrac\d2=v-\sfrac\d2$ is midway between $u$ and $v$
then, using
$F^\up(z)^2+F^\dn(z)^2=i\,\Im[F(z)^2]$, we have
\be
\begin{split}
H(v)-H(u)&=H(w+\sfrac\d2)-H(w)+H(w)-H(w-\sfrac\d2)\\
&=|F^\dn(w)|^2-|F^\up(w)|^2
=\tfrac1i(F^\up(w)^2+F^\dn(w)^2)\\
&=\Im[F(w)^2].
\end{split}
\ee
\end{remark}

\begin{proof}[Proof of Proposition~\ref{H-prop}]
Uniqueness up to an additive constant follows since if
we fix $H(u)$ for some point $u$, then for $v\neq u$
we may obtain the value $H(v)$ by integrating
using~\eqref{discr-diff-eq} and~\eqref{der-eq}.
To see that $H$ is well-defined,
consider a situation such as in Figure~\ref{H-def-fig}.
It suffices to show that the total increment of
$H$ around the blue (left) contour
and around the green (right) contour are both equal to 0.
We prove this for the green (right) contour, the other one being
similar.  
\begin{figure}[hbt]
\centering
\includegraphics{fermionic.3}
\caption{Contours in the proof of Proposition~\ref{H-prop}.}
\label{H-def-fig}
\end{figure}

Let us write, for $a,b,c,t_1,t_2\in\RR$ and $j=1,2$,
 $u_j=a+i t_j$,  $w_j=b+i t_j$ and  $v_j=c+i t_j$.  
 We have
\be\begin{split}
[H(v_2)-&H(v_1)]+[H(w_1)-H(w_2)]=
\int_{t_1}^{t_2}\big(\dot H(c+it)-\dot H(b+it)\big)dt\\
&=\tfrac2\d\int_{t_1}^{t_2}F^\dn(b+it)\big(
F^\up(c+it)-F^\up(b+it)\big)dt\\
&=\tfrac1i\int_{t_1}^{t_2} 2F^\dn(b+it)\dot F^\dn(b+it) dt\\
&=\tfrac1i\big(F^\dn(w_2)^2-F^\dn(w_1)^2\big)
=|F^\dn(w_2)|^2-|F^\dn(w_1)|^2\\
&=[H(v_2)-H(w_2)]-[H(v_1)-H(w_1)].
\end{split}\ee
That is, the increments around the green contour satisfy
\[
[H(v_2)-H(v_1)]+[H(w_2)-H(v_2)]+
[H(w_1)-H(w_2)]+[H(v_1)-H(w_1)]=0,
\]
as required.

We turn now to the statement~\eqref{H-sub-sup}.
We give the details for $u\in\Dom_\d^\b$, the case $w\in\Dom_\d^\w$
being similar.
Since $\dot H(u)=\tfrac2\d F^\up(u)F^\dn(u)$
we have 
\be
\ddot H(u)=\tfrac2\d\big(
\dot F^\up(u)F^\dn(u)+F^\up(u)\dot F^\dn(u)
\big).
\ee
Using s-holomorphicity we deduce that
\be
\begin{split}
\ddot H(u)&=2\tfrac{i}{\d^2}\big(
[F^\dn(u+\sfrac\d2)-F^\dn(u-\sfrac\d2)]F^\dn(u)\\
&\qquad\qquad
+[F^\up(u+\sfrac\d2)-F^\up(u-\sfrac\d2)]F^\up(u)
\big)\\
&=\tfrac{i}{\d^2} \big(
2F^\dn(u-\sfrac\d2)F^\dn(u+\sfrac\d2)-2F^\dn(u-\sfrac\d2)^2\\
&\qquad\qquad+2F^\up(u+\sfrac\d2)^2-
2F^\up(u-\sfrac\d2)F^\up(u+\sfrac\d2)
\big).
\end{split}
\ee
Next, 
\be
\begin{split}
H(u-\d)-H(u)&=
|F^\up(u-\sfrac\d2)|^2-|F^\dn(u-\sfrac\d2)|^2\\
&=i(F^\up(u-\sfrac\d2)^2+F^\dn(u-\sfrac\d2)^2),\mbox{ and }\\
H(u+\d)-H(u)&=
|F^\dn(u+\sfrac\d2)|^2-|F^\up(u+\sfrac\d2)|^2\\
&=-i(F^\dn(u+\sfrac\d2)^2+F^\up(u+\sfrac\d2)^2).
\end{split}
\ee
It follows that
\be
\begin{split}
\d^2&\ddot H(u)+
[H(u-\d)-H(u)]+[H(u+\d)-H(u)]\\
&=i\big(
[F^\up(u-\sfrac\d2)-F^\up(u+\sfrac\d2)]^2-
[F^\dn(u-\sfrac\d2)-F^\dn(u+\sfrac\d2)]^2
\big).
\end{split}
\ee
Writing $F^\up(u-\sfrac\d2)=ae^{-i\pi/4}$,
$F^\up(u+\sfrac\d2)=be^{-i\pi/4}$, 
$F^\dn(u-\sfrac\d2)=ce^{i\pi/4}$
and $F^\dn(u+\sfrac\d2)=de^{i\pi/4}$,
for $a,b,c,d\in\RR$, the right-hand-side
equals
\be
i\big(\tfrac1i[a-b]^2-i[c-d]^2
\big)=(a-b)^2+(c-d)^2.
\ee
But we also have that
\be\begin{split}
&|F(u-\sfrac\d2)-F(u+\sfrac\d2)|^2\\
&\quad=
|(F^\up(u-\sfrac\d2)-F^\up(u+\sfrac\d2))+
 (F^\up(u-\sfrac\d2)-F^\up(u+\sfrac\d2))|^2\\
&\quad=(a-b)^2+(c-d)^2, \mbox{ since } \ell(\up)\perp\ell(\dn).
\end{split}\ee
Thus, using also~\eqref{prehol},
\be
\d^2 [\D_\d H](u)=
|F(u-\sfrac\d2)-F(u+\sfrac\d2)|^2=\d^2|\dot F(u)|^2,
\ee
as claimed.
\end{proof}

\section{The FK-observable}\label{fk-obs-sec}

\subsection{Definition}

Let $(\Dom_\d,a_\d,b_\d)$ be a Dobrushin domain
(see Section~\ref{dom-sec} for notation).
We will consider \fk-configurations $\xi$ in $\Dom_\d$ and 
their duals $\xi'$.  These are defined as in
Section~\ref{fk-sec}
with some adaptations of the boundary condition.
We take $\xi=\xi^\b\cup\xi^\w$ with 
$\xi^\b\se\Dom^{\b,\itr}_\d$ and $\xi^\w\se\Dom^{\w,\itr}_\d$ finite
subsets.   Note that we do not allow $\xi^\b$ to have any points on
the black part $\partial_\d^\b$ of the boundary, nor do we allow
$\xi^\w$ to have any points on the white part $\partial^\w_\d$. 
Instead of applying periodic boundary conditions, we let
horizontal 
segments in $\partial^\b_\d$ and $\partial^\w_\d$ count as primal and
dual bridges, respectively.  Thus, in essence, we have separately
wired together the black and white parts $\partial^\b_\d$ and
$\partial^\w_\d$ of the boundary.
See Figure~\ref{domain-interface-fig}.

\begin{figure}[hbt]
\centering
\includegraphics{fermionic.5}\hspace{1cm}
\includegraphics{fermionic.6}
\caption{Dobrushin domain $(\Dom_\d,a_\d,b_\d)$ 
with an \textsc{fk} configuration $\xi$ and its dual $\xi'$, 
as well as the interface $\g$
(left) and the $L(\xi)=5$  loops (right).  
We have omitted the $\times$-marks for
cuts.}
\label{domain-interface-fig}
\end{figure}

We adjust the
locations of the points $a_\d$ and $b_\d$ slightly compared to
Section~\ref{dom-sec}, as follows.
Firstly, we assume that $a_\d$ is placed 
so that the first point of $\CC_\d^\b\cup\CC_\d^\w$ visited by
$\partial_\d^\b$ (as it travels clockwise from $a_\d$ to $b_\d$)
belongs to $\CC_\d^\b$.  Thus $a_\d$ is of the form $u+\sfrac\d4$
for some $u\in\Dom_\d^\b$ if $\Dom_\d$ is `above' $a_\d$, or of the
form $u-\sfrac\d4$ if $\Dom_\d$ is `below' $a_\d$.
With this assumption,  an \fk-configuration $\xi$ together
with its dual $\xi'$ define an interface $\g$ from $a_\d$ to $b_\d$,
separating the (primal) component of $\partial_\d^\b$ from the (dual)
component of $\partial_\d^\w$, and $\g$ always has black on the left and
white on the right as it travels from $a_\d$ to $b_\d$.
We take $\g$ to travel in the directions $\up,\dn$
on the medial lattice $\CC_\d^\diamondsuit$ between bridges, and
in the directions $\lt,\rt$ at bridges (if $\g$ passes the same bridge
twice we slightly separate the points where it passes).
We also shift $b_\d$ left or right by $\sfrac\d4$ so
that the interface $\g$ ends pointing in the direction $\rt$ into
$b_\d$.   See Figure~\ref{domain-interface-fig} again.

Apart from the interface $\g$, we also draw a loop around each 
(primal and dual) component which is disjoint from the boundary.
We let $L(\xi)$ denote the number of such loops.

Let $\EE_\d(\cdot)$ denote the probability measure under which 
$\xi^\b$ and $\xi^\w$ are independent Poisson processes on 
$\Dom_\d^{\b,\itr}$ and $\Dom_\d^{\w,\itr}$, respectively, 
\emph{both with the same rate} $\tfrac1{\d\sqrt2}$.  By~\eqref{fk}, the
appropriate density of a random \fk-configuration $\xi$ with respect
to $\EE_\d(\cdot)$ is proportional to
\be\label{fk-dens-1}
2^{k^\b(\xi)} h^{|\xi^\b|} (2J)^{|\xi^\w|} (\d\sqrt2)^{|\xi^\b|+|\xi^\w|}.
\ee
Using the Euler-relation one may see that
$k^\b(\xi)-|\xi^\b|=k^\w(\xi)-|\xi^\w|+\mathrm{cst}$ for some constant
not depending on $\xi$.  Also, $L(\xi)=k^\b(\xi)+k^\w(\xi)-2$.
We choose the parameters
\be
h=J=\tfrac1{2\d}.
\ee
 It then follows that the density~\eqref{fk-dens-1} is 
proportional to simply $(\sqrt2)^{L(\xi)}$.  
We write $\hat\EE_\d=\hat\EE_{(\Dom_\d,a_\d,b_\d)}$  
for the critical \fk-law in $(\Dom_\d,a_\d,b_\d)$ given by
\be\label{fk-dens}
\frac{d\hat\EE_\d}{d\EE_\d}(\xi)=
\frac{(\sqrt{2})^{L(\xi)}}{Z_\d},
\mbox{ where } Z_\d=\EE_\d[(\sqrt2)^{L(\xi)}].
\ee

Now let $z\in\Dom_\d^\b\cup \Dom_\d^\w$ be arbitrary.
For $\a\in\{\up,\dn,\lt,\rt\}$, define the event
\[
\G^\a_z=\{\xi: \g(\xi)\mbox{ passes by $z$ in direction }\a\}.
\]
For $\a\in\{\up,\dn\}$ we count both the case when $\g$ passes
on the left side of $z$ (i.e.\ goes through $z-\sfrac\d4$) and 
when it passes on the right side (i.e.\ goes through $z+\sfrac\d4$).
Similarly, for $\a\in\{\lt,\rt\}$ we count both the cases when
$\g$ passes `just below' $z$ and `just above' $z$.

Assuming that $\G^\a_z$ happens, let 
$W^\a_\g(z)$ denote the winding-angle (in radians) of $\g$
from $z$ to the exit $b_\d$;
if $\g$ passes $z$ twice, in opposite directions, we count here the
winding angle from when it passes in direction $\a$.
Note that $W^\a_\g(z)$ is deterministic up to a multiple
of $2\pi$.

We define the four
(random) functions $\varphi^\up(\xi; z)$,
$\varphi^\dn(\xi; z)$,
$\varphi^\lt(\xi; z)$ and
$\varphi^\rt(\xi; z)$ by
\be\label{phi-om-def}
\varphi^\a(\xi; z)=\one_{\G^\a_z}(\xi)
\exp(\tfrac{i}{2}W^\a_{\g(\xi)}(z)).
\ee
Note that the supports of $\varphi^\lt(\xi; z)$ and of
$\varphi^\rt(\xi; z)$ are discrete sets contained in
$\xi\cup\partial^\rh\Dom_\d$, 
whereas the supports of $\varphi^\up(\xi; z)$ and
$\varphi^\dn(\xi; z)$ are disjoint from $\xi$.
Also note 
that if $u\in\Dom_\d^\b$ is black 
and $w=u+\sfrac\d2$ is the white neighbour
of $u$ on the right, then 
$\varphi^\up(\xi;u)=\varphi^\up(\xi;w)$,
whereas if $w'=u-\sfrac\d2$ is the white neighbour of $u$
on the left then 
$\varphi^\dn(\xi;u)=\varphi^\dn(\xi;w')$.
(Here we assume that $u\pm\sfrac\d2\in\Dom_\d^\w$ in the appropriate
cases.)

\begin{definition}\label{fk-obs-def}
Write 
\be\label{phi-updn}
\Phi_\d^\up(z)=\hat\EE_\d[\varphi^\up(\xi;z)],\quad
\Phi_\d^\dn(z)=\hat\EE_\d[\varphi^\dn(\xi;z)].
\ee
 We define the \textsc{fk}--Ising observable 
$F_\d(z)=F^{\mathrm{FK}}_\d(z)$ by
\be\label{fk-obs-eq}
F_\d(z)=\Phi_\d^\up(z)+\Phi_\d^\dn(z),\quad
z\in\Dom_\d^\b\cup\Dom_\d^\w.
\ee
\end{definition}

We remark that the notation used here is consistent with our previous 
notation~\eqref{up-dn-notation} for the projections 
$F^\up$, $F^\dn$ of a function
$F$ onto $\ell(\up)=e^{-i\pi/4}\RR$ and
$\ell(\dn)=e^{i\pi/4}\RR$, in the sense that
\[
F_\d^\up(z)=\Phi_\d^\up(z)\mbox{ and } F_\d^\dn(z)=\Phi_\d^\dn(z).
\]
Indeed, if we identify arrows 
$\a\in\{\up,\dn,\lt,\rt\}$ with complex numbers by
the rules
\be
\rt\,=1=i^0,\quad
\up\,=i=i^1,\quad
\lt\,=-1=i^2,\quad
\dn\,=-i=i^3,
\ee
then we have that 
\be
W_{\g(\xi)}^\a(z)=-\arg(\a)+2\pi n(\xi)
\ee
for some random $n(\xi)\in\ZZ$.  Thus
$\varphi^\a(\xi;z)$ is a real multiple of $\sqrt{\ol{\a}}$
i.e.\ belongs to $\ell(\a)$.
Note that the line  $\sqrt{\ol{\a}}\RR$ does not
depend on the choice of square-root.

\subsection{Comparison with isoradial graphs}
\label{isorad-sec}

For readers familiar with the work of Chelkak and 
Smirnov~\cite{ch-sm} on the
classical Ising model on isoradial graphs, the following brief
discussion may be useful.  Let $0<\eps\ll\d$ and consider a rhombic
tiling of $\CC$ where all the rhombi have two vertices in each of
$\CC^\b_\d$ and $\CC^\w_\d$, and acute angle $2\eps$,
as in Figure~\ref{squeeze-2_fig}.
This corresponds to an isoradial embedding of $\ZZ^2$ with common
radius $\tfrac\d{2\cos(\eps)}$ and vertices restricted to
$\CC^\b_\d$.   

\begin{figure}[hbt]
\centering
\includegraphics{fermionic.7}
\caption{Isoradial approximation of the \fk-representation of the
  \tfim.} 
\label{squeeze-2_fig}
\end{figure}

Let $\hat\EE_{\d,\eps}(\cdot)$ denote the law of the critical
(classical) \fk--Ising model in some Dobrushin-domain in this graph,
as given in eq.~(2.1) 
of~\cite{ch-sm}, and let $\g_\eps$ denote the interface.  
It is well-known that the laws $\hat\EE_{\d,\eps}$
converge weakly to $\hat\EE_\d$ as $\eps\to0$.

In this setting, the interface $\g_\eps$ is taken 
to cross the rhombus-sides perpendicularly,
i.e.\ roughly speaking in the directions
$\nearrow$, $\searrow$, $\nwarrow$
and $\swarrow$.  If we specify a rhombus as well as one of these four
directions of travel, this corresponds to a unique edge of the
rhombus, hence 
the edge-observables~\cite[eq.~(2.2)]{ch-sm} 
of Chelkak and Smirnov can be indexed
as $F_{\d,\eps}^\nearrow(z), F_{\d,\eps}^\searrow(z), \dotsc$
for  rhombus centres  $z$.  Using notation similar
to~\eqref{phi-om-def}, we have (up to a real factor)
\be
F_{\d,\eps}^\a(z)=\hat\EE_{\d,\eps}[\one_{\G_z^\a}
\exp(\tfrac{i}{2}W^\a_{\g_\eps}(z))],\quad
\a\in\{\nearrow,\searrow,\nwarrow,\swarrow\}.
\ee

We may further take $\g_\eps$
to pass `closest' to rhombus centres $z$
in the directions $\up$, $\dn$, $\lt$ or $\rt$.
This allows us to define more observables:
\be
\Phi_{\d,\eps}^\a(z)=\hat\EE_{\d,\eps}[\one_{\G_z^\a}
\exp(\tfrac{i}{2}W^\a_{\g_\eps}(z))],\quad
\a\in\{\up,\dn,\lt,\rt\}.
\ee
Clearly each $\Phi_{\d,\eps}^\a(z)\in\ell(\a)$
as is the case for the $\Phi_\d^\a(z)$ (provided we assume that
$\g_\eps$ exits the domain in the direction $\rt$).

Referring to Figure~\ref{squeeze-2_fig}, we see for example that 
if $\g_\eps$ enters the
rhombus of $z$ in direction $\nwarrow$
(edge on the lower right of $z$), then
it passes closest to $z$ in \emph{either}
direction $\up$ as depicted, or directon $\lt$,
\emph{but not both} ($\up$ if there is a black vertical 
edge at $z$, and $\lt$ if there is a white horizontal
edge).  
Similar considerations apply at all rhombus centres, and this allows
us to derive linear relations for the  
$F_{\d,\eps}^\a(z)$ in terms of the $\Phi_{\d,\eps}^\a(z)$.
Writing $\eps^\star=\tfrac\pi2-\eps$ we have:
\be\label{mtrx}
\begin{pmatrix}
F_{\d,\eps}^\nwarrow(z) \\
F_{\d,\eps}^\nearrow(z) \\
F_{\d,\eps}^\searrow(z) \\
F_{\d,\eps}^\swarrow(z) 
\end{pmatrix}=
\begin{pmatrix}
e^{-\tfrac i2 \eps} & 0 & 0 & e^{\tfrac i2 \eps^\star} \\
e^{\tfrac i2 \eps} & e^{-\tfrac i2 \eps^\star} & 0 & 0 \\
0 & e^{\tfrac i2 \eps^\star} & e^{-\tfrac i2 \eps} & 0 \\
0 & 0 & e^{\tfrac i2 \eps} & e^{-\tfrac i2 \eps^\star}
\end{pmatrix}
\begin{pmatrix}
\Phi_{\d,\eps}^\up(z) \\
\Phi_{\d,\eps}^\rt(z) \\
\Phi_{\d,\eps}^\dn(z) \\
\Phi_{\d,\eps}^\lt(z) 
\end{pmatrix}.
\ee
The \fk--Ising observable~\cite[eq.~(2.4)]{ch-sm} of Chelkak and
Smirnov is given by
\be
\begin{split}
F_{\d,\eps}(z)&=\tfrac12\textstyle{\sum}_\a F_{\d,\eps}^\a(z)\\
&=\cos(\eps/2)[\Phi_\eps^\up(z)+ \Phi_\eps^\dn(z)]
+\cos(\eps^\star/2)[\Phi_\eps^\lt(z)+ \Phi_\eps^\rt(z)],
\end{split}
\ee
where the second line uses~\eqref{mtrx}.
Assuming the limits
\be
\Phi_\d^\up(z)=\lim_{\eps\to 0} \Phi_{\d,\eps}^\up(z),\quad
\Phi_\d^\dn(z)=\lim_{\eps\to 0} \Phi_{\d,\eps}^\dn(z),
\ee
as well as $\Phi_{\d,\eps}^\a(z)=O(\eps)$ for 
$\a\in\{\lt,\rt\}$, we get
\be
\lim_{\eps\to0} F_{\d,\eps}(z)=\Phi_\d^\up(z)+ \Phi_\d^\dn(z),
\ee
which is how we defined our observable $F_\d(z)$.


\subsection{S-holomorphicity}

In this section we show the following result:

\begin{theorem}\label{s-hol-thm-fk}
Let $F_\d=F^{\mathrm{FK}}_\d$ be the \fk-observable in a Dobrushin
domain $(\Dom_\d,a_\d,b_\d)$.  Then $F_\d$ is s-holomorphic in
$\Dom_\d$.
\end{theorem}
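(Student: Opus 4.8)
The plan is to verify the two types of s-holomorphicity conditions --- the "projection-matching" identities \eqref{s-hol-def-w-1}, \eqref{s-hol-def-u-1} and the "discrete Cauchy--Riemann" identities \eqref{s-hol-def-w-2}, \eqref{s-hol-def-u-2} --- directly from the definition of $F_\d$ as an expectation of the winding-phase functions $\varphi^\a(\xi;z)$. The projection-matching identities are essentially combinatorial/geometric and should be the easy part: as already noted in the text just before Definition~\ref{fk-obs-def}, for a black $u$ and its right white neighbour $w=u+\sfrac\d2$ one has the pointwise identity $\varphi^\up(\xi;u)=\varphi^\up(\xi;w)$ (the interface, travelling up through the vertical segment separating $\Dom_\d^\b$'s component from $\Dom_\d^\w$'s, passes "the same place" whether we read it off from $u$ or from $w$), and similarly $\varphi^\dn(\xi;u)=\varphi^\dn(\xi;u-\sfrac\d2)$. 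Taking $\hat\EE_\d$-expectations gives $F_\d^\up(u)=F_\d^\up(u+\sfrac\d2)$ and $F_\d^\dn(u)=F_\d^\dn(u-\sfrac\d2)$, i.e.\ \eqref{s-hol-def-u-1}, and the white-vertex case \eqref{s-hol-def-w-1} is the mirror image. So the first step is just to spell out these pointwise winding-angle identities carefully, keeping track of the convention that $\varphi^\a$ counts passages on both sides of $z$ and, for $\a\in\{\up,\dn\}$, has support disjoint from $\xi$.

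The substantive step is the Cauchy--Riemann relations \eqref{s-hol-def-w-2}, \eqref{s-hol-def-u-2}. Here I would follow Smirnov's bijection argument adapted to the continuous vertical direction. Fix, say, a white interior point $w$ and consider the "source-switching"/interface-rerouting involution: conditionally on the percolation configuration away from a small neighbourhood of $w$, reverse the state of the local edge picture at $w$ (turn a cut into a bridge or vice versa, at rate-balanced weight $\sqrt2$). Because the loop weight is exactly $\sqrt2$ and the rates were chosen (with $h=J=\tfrac1{2\d}$) so that the density is $(\sqrt2)^{L(\xi)}$, this local modification either preserves the number of loops or changes $L$ by one in a way exactly compensated by the $\sqrt2$; the net effect is that the weighted contributions of the configurations where $\g$ passes $w$ in direction $\up$, $\dn$, $\lt$, $\rt$ are linked. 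Concretely, rerouting $\g$ at $w$ rotates the outgoing direction by $\pm\pi/2$, multiplying the winding phase $e^{\frac i2 W^\a}$ by $e^{\pm i\pi/4}$, and summing the four resulting relations produces the discrete contour identity that, in the $\d\ZZ+i\RR$ geometry, becomes precisely the statement that the horizontal difference $\tfrac1\d(F_\d(w+\sfrac\d2)-F_\d(w-\sfrac\d2))$ equals $-i\dot F_\d(w)$ after separating the $\ell(\up)$ and $\ell(\dn)$ components --- which is exactly \eqref{s-hol-def-w-2}. The vertical-derivative terms $\dot F^\up,\dot F^\dn$ arise from differentiating the Poisson law in the time coordinate: the derivative of the probability of "no bridge/cut in $[t,t+\eps]$" contributes the rate $\tfrac1{\d\sqrt2}$, and this is where the factor $\tfrac i\d$ on the right-hand sides of \eqref{s-hol-def-w-2}, \eqref{s-hol-def-u-2} comes from.

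I would organize this as: (i) set up the local rerouting involution at a point $z$ and check it is measure-preserving up to the explicit loop-weight bookkeeping; (ii) read off the four linear relations among $\varphi^\up,\varphi^\dn,\varphi^\lt,\varphi^\rt$ at the level of weighted configurations, taking expectations to get relations among $\Phi_\d^\up,\Phi_\d^\dn$ and the (discretely-supported) $\Phi_\d^\lt,\Phi_\d^\rt$; (iii) eliminate the $\lt,\rt$ contributions --- they live on $\xi\cup\partial^\rh\Dom_\d$ and enter the vertical-derivative computation via the Poisson intensity --- and match the surviving identity against Definition~\ref{s-hol-def}. The main obstacle I expect is step (ii)--(iii): getting the winding-angle arithmetic and the sign/phase conventions exactly right so that the combinatorial identity lands on the precise normalization $\tfrac i\d$ in \eqref{s-hol-def-w-2}/\eqref{s-hol-def-u-2}, and handling the two boundary cases (when $w\pm\sfrac\d2$ or the relevant medial point lies on $\partial\Dom_\d$) so that "interior s-holomorphicity" is all that is claimed. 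Once the pointwise identities and the Poisson-derivative computation are in hand, the theorem follows by taking $\hat\EE_\d$-expectations term by term.
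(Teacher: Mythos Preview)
Your plan is correct and follows essentially the same route as the paper. The paper likewise dispatches \eqref{s-hol-def-w-1}/\eqref{s-hol-def-u-1} immediately from the pointwise identity $\varphi^\up(\xi;u)=\varphi^\up(\xi;u+\sfrac\d2)$, then introduces exactly your ``rerouting involution'' $\xi\mapsto\xi_z=\xi\sd\{z\}$ to define auxiliary observables $\Phi_\d^\lt,\Phi_\d^\rt$; your step~(ii) is the paper's Lemma~\ref{turn-lem} (the case analysis of how $L(\xi)$ changes and how the winding shifts by $\pm\pi/2$ under the switch), and your step~(iii) is the paper's Lemma~\ref{phi-dot-lem} (differentiating the Poisson law vertically, picking up the rate $\tfrac1{\d\sqrt2}$) followed by the short elimination of $\Phi_\d^\lt,\Phi_\d^\rt$ to land on \eqref{s-hol-def-w-2}/\eqref{s-hol-def-u-2}.
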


It is immediate that $F_\d$ satisfies the
conditions~\eqref{s-hol-def-w-1} and~\eqref{s-hol-def-u-1}
in the definition of s-holomorphicity, see
the discussion just above Definition~\ref{fk-obs-def}.
We thus need to show that also~\eqref{s-hol-def-w-2}
and~\eqref{s-hol-def-u-2} are satisfied.
In the proof we drop the subscript $\d$ from $\EE$ and $\hat\EE$. 

For $z\in\Dom_\d^{\b,\itr}\cup\Dom_\d^{\w,\itr}$ we
let $\xi_z=\xi\sd\{z\}$ and
we define the auxiliary observables 
\be\label{phi-ltrt}
\begin{split}
\Phi_\d^\lt(z)&=\hat\EE[(\sqrt{2})^{L(\xi_z)-L(\xi)}\varphi^\lt(\xi_z;z)],\\
\Phi_\d^\rt(z)&=\hat\EE[(\sqrt{2})^{L(\xi_z)-L(\xi)}\varphi^\rt(\xi_z;z)].
\end{split}
\ee
If $z\in\partial^\rv\Dom_\d$ is in the vertical part of the boundary
then we set $\Phi_\d^\lt(z)=\Phi_\d^\rt(z)=0$, whereas if 
$z\in\partial^\rh\Dom_\d$ is in the horizontal part we define them as
in~\eqref{phi-ltrt} but with $\xi_z$ replaced by $\xi$.
As we remarked above we have that $\Phi_\d^\rt(z)\in\ell(\rt)=\RR$ and
$\Phi_\d^\lt(z)\in\ell(\lt)=i\RR$.
We now claim the following:

\begin{lemma}\label{turn-lem}
For all $z\in\Dom_\d^{\b,\itr}\cup\Dom_\d^{\w,\itr}$ we have that 
\be\label{turn-eq}
\begin{split}
\Phi_\d^\up(z)&=\tfrac{1}{\sqrt2}\big(
e^{i\pi/4}\Phi_\d^\lt(z)+e^{-i\pi/4}\Phi_\d^\rt(z)\big),\\
\Phi_\d^\dn(z)&=\tfrac{1}{\sqrt2}\big(
e^{i\pi/4}\Phi_\d^\rt(z)+e^{-i\pi/4}\Phi_\d^\lt(z)\big).
\end{split}
\ee
\end{lemma}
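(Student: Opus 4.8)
\textbf{Proof plan for Lemma~\ref{turn-lem}.}
The plan is to condition on the behaviour of the interface $\g$ in a small neighbourhood of the point $z$ and to resum the four possibilities according to the direction in which $\g$ departs from $z$. Fix $z\in\Dom_\d^{\b,\itr}\cup\Dom_\d^{\w,\itr}$ and work first with $\Phi_\d^\up(z)=\hat\EE[\varphi^\up(\xi;z)]$. By definition $\varphi^\up(\xi;z)$ is supported on the event $\G^\up_z$ that $\g$ passes by $z$ travelling vertically upward (on either the left or right side of $z$, i.e.\ through $z\mp\sfrac\d4$), and there it equals $\exp(\tfrac i2 W^\up_\g(z))$. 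On that event $z$ is not in $\xi$ (the supports of $\varphi^\up,\varphi^\dn$ are disjoint from $\xi$), so $\xi_z=\xi\sd\{z\}$ differs from $\xi$ by inserting a bridge or cut exactly at $z$. The key local picture is the following: starting from a configuration with $\g$ arriving `below' $z$ going up, toggling the status of $z$ (inserting a bridge if $z$ is white with no bridge there, or a cut if $z$ is black, etc.) forces $\g$ to turn either left or right at $z$, and conversely; and these two toggled configurations are precisely the ones contributing to $\varphi^\lt(\xi_z;z)$ and $\varphi^\rt(\xi_z;z)$. So the idea is to set up a bijection (or rather two bijections, for the two sides of $z$) between the configurations contributing to $\Phi^\up_\d(z)$ and those contributing to $\Phi^\lt_\d(z)$ together with those contributing to $\Phi^\rt_\d(z)$, keeping track of (i) the change in the loop count $L$ and (ii) the change in winding angle.

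The second step is to track these two quantities under the toggle. For the winding angle: if $\g$ goes up past $z$ and we reroute it to turn left at $z$, the winding of the rerouted curve (measured from the turn at $z$ to $b_\d$) differs from $W^\up_\g(z)$ by $+\pi/2$; if instead it turns right, the difference is $-\pi/2$. Hence $\exp(\tfrac i2 W^\up) = e^{\mp i\pi/4}\exp(\tfrac i2 W^{\lt/\rt})$ on the corresponding pieces, which is exactly the source of the factors $e^{\pm i\pi/4}$ in~\eqref{turn-eq}. For the loop count: inserting a bridge/cut at $z$ either merges two components or splits one, changing $L(\xi_z)-L(\xi)$ by $\pm1$; but in the relevant configurations the endpoint-at-$b_\d$ and the interface $\g$ mean that the toggle on the interface does \emph{not} create or destroy a closed loop — it only modifies $\g$ — so one must check carefully that on these events $L(\xi_z)=L(\xi)$, so that the weight $(\sqrt2)^{L(\xi_z)-L(\xi)}$ in~\eqref{phi-ltrt} equals $1$ there. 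Equivalently, one compares the densities~\eqref{fk-dens}: $\tfrac{d\hat\EE}{d\EE}(\xi_z)/\tfrac{d\hat\EE}{d\EE}(\xi)=(\sqrt2)^{L(\xi_z)-L(\xi)}$, and the Poisson reweighting from $\xi$ to $\xi_z$ at the single site $z$ (adding/removing one point of a rate-$\tfrac1{\d\sqrt2}$ process, with the chosen $h=J=\tfrac1{2\d}$) contributes exactly the normalising $\tfrac1{\sqrt2}$ in front of the right-hand side of~\eqref{turn-eq}. Summing the two side-contributions (left of $z$, right of $z$) and the two turn-directions then gives the first line of~\eqref{turn-eq}; the second line, for $\Phi^\dn_\d(z)$, is identical with the roles of left/right (and the signs of the $\pi/4$) interchanged, since going down and turning left picks up $-\pi/2$ while turning right picks up $+\pi/2$.

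I would organize the write-up by first treating a single `side' of $z$, say configurations where $\g$ passes through $z-\sfrac\d4$ going up, exhibiting the involution that toggles $z$ and splits this class into the $\lt$- and $\rt$-classes through $z-\sfrac\d4$; then remark that the same works verbatim on the other side $z+\sfrac\d4$; and finally note that $\Phi^\lt_\d$ and $\Phi^\rt_\d$ as defined in~\eqref{phi-ltrt} already sum over both sides. The boundary cases ($z\in\partial^\rv\Dom_\d$, where $\Phi^\lt=\Phi^\rt=0$, and $z\in\partial^\rh\Dom_\d$, where $\xi_z$ is replaced by $\xi$) should be checked to be consistent, but these are degenerate instances of the same bookkeeping. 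The main obstacle I anticipate is the loop-count accounting: one has to argue cleanly that toggling $z$ \emph{on the interface} changes $L$ by exactly the amount that the $(\sqrt2)^{L(\xi_z)-L(\xi)}$ factor and the Poisson density jointly account for, with no spurious extra loop appearing or disappearing; this is where a careful look at Figure~\ref{domain-interface-fig} and the Euler-relation bookkeeping around~\eqref{fk-dens-1} is needed, and where the precise conventions for how $\g$ is drawn at bridges (and separated when traversed twice) matter.
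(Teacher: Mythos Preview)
Your overall strategy---toggle the point $z$ and relate the $\up$-contribution to $\lt$- and $\rt$-contributions via the change in winding---is the same as the paper's. But the loop-count step, which you flag as the ``main obstacle'', is genuinely wrong as stated, and this is not just bookkeeping: it is where the factor $\tfrac{1}{\sqrt2}$ actually comes from.

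Your claim that ``the toggle on the interface does not create or destroy a closed loop'' so that $L(\xi_z)=L(\xi)$ is false. In fact $L(\xi_z)-L(\xi)=\pm1$ always, and the sign depends on the topology of how $\g$ passes $z$. The paper splits $\G^\up_z$ into three events: $A$ (the interface passes $z$ going $\up$ only), $B$ ($\g$ passes $\dn$ then $\up$), and $C$ ($\g$ passes $\up$ then $\dn$). On $A$ there is necessarily a loop adjacent to $z$ on the side not touched by $\g$; toggling $z$ absorbs that loop into $\g$, so $L(\xi)=L(\xi_z)+1$. On $B$ and $C$ toggling $z$ pinches off part of $\g$ into a new loop, so $L(\xi)=L(\xi_z)-1$. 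The miracle is that these different loop-count changes combine with the multiplicities (on $A'=\{\xi_z:\xi\in A\}$ the interface passes $z$ in \emph{both} directions $\lt$ and $\rt$, so $\varphi^\up$ splits as $\tfrac12\varphi^\lt e^{i\pi/4}+\tfrac12\varphi^\rt e^{-i\pi/4}$, whereas on $B',C'$ only one of $\lt,\rt$ occurs) to give a uniform identity
\[
(\sqrt2)^{L(\xi)}\varphi^\up(\xi;z)
=(\sqrt2)^{L(\xi_z)-1}\big(\varphi^\lt(\xi_z;z)e^{i\pi/4}+\varphi^\rt(\xi_z;z)e^{-i\pi/4}\big),
\]
and \emph{that} is where the $\tfrac{1}{\sqrt2}$ comes from after taking $\EE$-expectations and unwinding the definition~\eqref{phi-ltrt}. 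There is no ``Poisson reweighting'': $\xi_z$ is just a measurable function of $\xi$, and the expectation $\EE$ is not being changed. Your proposed organisation by ``side of $z$'' will not separate these cases cleanly, since in $B$ and $C$ the interface visits both sides. (Incidentally, your winding-angle signs are also reversed: one has $W^\up_{\g(\xi)}(z)=W^\lt_{\g(\xi_z)}(z)+\tfrac\pi2$, not $-\tfrac\pi2$; but this is minor and would be caught when matching the target formula.)
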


\begin{proof}
We prove the statement for $\Phi_\d^\up(z)$ in the case when
$z\in\Dom_\d^{\w,\itr}$ is white, the other cases are similar.  
We refer to Figures~\ref{prehol_fig-1},
\ref{prehol_fig-2} and \ref{prehol_fig-3}.  

\begin{figure}[hbt]
\centering
\includegraphics{fermionic.8}\hspace{1cm}
\includegraphics{fermionic.9}
\caption{In $\xi$ the interface $\g$ passes $z$ in direction $\up$ only,
  in $\xi_z$ it passes in directions $\rt$ and $\lt$.}
\label{prehol_fig-1}
\end{figure}

\begin{figure}[hbt]
\centering
\includegraphics{fermionic.10}\hspace{1cm}
\includegraphics{fermionic.11}
\caption{In $\xi$ the interface $\g$ passes $z$ in directions $\up$
  and $\dn$,    in $\xi_z$ it passes in direction  $\lt$.}
\label{prehol_fig-2}
\end{figure}

\begin{figure}[hbt]
\centering
\includegraphics{fermionic.12}\hspace{1cm}
\includegraphics{fermionic.13}
\caption{In $\xi$ the interface $\g$ passes $z$ in directions $\up$
  and $\dn$,    in $\xi_z$ it passes in direction  $\rt$.}
\label{prehol_fig-3}
\end{figure}

Let $A$ denote the event
that $\g$ passes $z$ only once, in the direction $\up$, as depicted on
the left in Figure~\ref{prehol_fig-1}.  Let $A'$ denote the event that
$\g$ passes $z$ once in the direction $\rt$ and once in the direction
$\lt$, with $\rt$ coming first, 
as depicted on the right in Figure~\ref{prehol_fig-1}.
Similarly, let $B$ and $B'$ denote the events depicted in
Figure~\ref{prehol_fig-2}.  Explicitly, $B$ is the event that $\g$
passes $z$ both going $\up$ and $\dn$, with $\dn$ coming first, and
$B'$ is the event that $\g$ passes $z$ in direction $\lt$ only.
Finally, let $C$ and $C'$ be as in Figure~\ref{prehol_fig-3}:
$C$ is the event that $\g$ passes $z$ in direction $\up$ and later in
direction $\dn$, and $C'$ is the event that it passes in direction
$\rt$ only.  

We note the following facts.  Firstly,
\be
\begin{split}
&\xi\in A\Leftrightarrow \xi_z\in A',\mbox{ and then }
L(\xi)=L(\xi_z)+1,\\
&\xi\in B\Leftrightarrow \xi_z\in B',\mbox{ and then }
L(\xi)=L(\xi_z)-1,\\
&\xi\in C\Leftrightarrow \xi_z\in C', \mbox{ and then }
L(\xi)=L(\xi_z)-1.
\end{split}
\ee
Secondly, the event $\G^\up_z=\{\g\mbox{ passes $z$ going }\up\}$
satisfies
\be
\one_{\G^\up_z}(\xi)=\one_A(\xi)+\one_B(\xi)+\one_C(\xi)
\ee
and the events $\G^\lt_z$ and $\G^\rt_z$ satisfy
\be
\begin{split}
\one_{\G^\lt_z}(\xi_z)&=\one_{A'}(\xi_z)+\one_{B'}(\xi_z),
\mbox{ and }\\
\one_{\G^\rt_z}(\xi_z)&=\one_{A'}(\xi_z)+\one_{C'}(\xi_z).
\end{split}
\ee
Thirdly, the winding angles are related by
\be\label{angles-1}
\begin{split}
W^\up_{\g(\xi)}(z)&=W^\lt_{\g(\xi_z)}(z)+\pi/2,
\mbox{ for } \xi\in A\cup B;\\
W^\up_{\g(\xi)}(z)&=W^\rt_{\g(\xi_z)}(z)-\pi/2,
\mbox{ for } \xi\in A\cup C.
\end{split}
\ee
Using these facts, we obtain:
\be
\begin{split}
\varphi^\up(\xi;z)&=(\one_A(\xi)+\one_B(\xi)+\one_C(\xi))
\exp\big(\tfrac i2 W^\up_{\g(\xi)}(z)\big)\\
&=\tfrac12\one_{A'}(\xi_z) 
\exp\big(\tfrac i2 W^\lt_{\g(\xi_z)}(z)\big) e^{i\pi/4}\\
&\quad+\tfrac12\one_{A'}(\xi_z) 
\exp\big(\tfrac i2 W^\rt_{\g(\xi_z)}(z)\big) e^{-i\pi/4}\\
&\quad+\one_{B'}(\xi_z) 
\exp\big(\tfrac i2 W^\lt_{\g(\xi_z)}(z)\big) e^{i\pi/4}\\
&\quad+\one_{C'}(\xi_z) 
\exp\big(\tfrac i2 W^\rt_{\g(\xi_z)}(z)\big) e^{-i\pi/4}.
\end{split}
\ee
Thus
\be
\begin{split}
&(\sqrt2)^{L(\xi)}\varphi^\up(\xi;z)\\
&\quad=(\sqrt2)^{L(\xi)-2}\one_{A'}(\xi_z)
\big\{\exp\big(\tfrac i2 W^\lt_{\g(\xi_z)}(z)\big) e^{i\pi/4}+
\exp\big(\tfrac i2 W^\rt_{\g(\xi_z)}(z)\big) e^{-i\pi/4}\big\}\\
&\quad\quad + (\sqrt2)^{L(\xi)}\one_{B'}(\xi_z)
\exp\big(\tfrac i2 W^\lt_{\g(\xi_z)}(z)\big) e^{i\pi/4}\\
&\quad\quad + 
(\sqrt2)^{L(\xi)}\one_{C'}(\xi_z)
\exp\big(\tfrac i2 W^\rt_{\g(\xi_z)}(z)\big) e^{-i\pi/4}\\
&\quad=(\sqrt2)^{L(\xi_z)-1}
\big[\varphi^\lt(\xi_z;z)e^{i\pi/4}+
\varphi^\rt(\xi_z;z)e^{-i\pi/4}\big].
\end{split}
\ee
Taking the $\EE$-expectation,
\be\begin{split}
&\EE[(\sqrt2)^{L(\xi)}\varphi^\up(\xi;z)]\\
&\quad=\tfrac{1}{\sqrt2}\big(
\EE[(\sqrt2)^{L(\xi_z)}\varphi^\lt(\xi_z;z)]e^{i\pi/4}+
\EE[(\sqrt2)^{L(\xi_z)}\varphi^\rt(\xi_z;z)]e^{-i\pi/4}
\big).
\end{split}\ee
This readily gives the claim~\eqref{turn-eq}
for $\Phi^\up(z)$.
\end{proof}

We now calculate $\dot\Phi_\d^\up$ and $\dot\Phi_\d^\dn$.
We will use 
the notation $\xi(z,z+i\eps)$ for the number of elements of
$\xi$ in the interval $(z,z+i\eps)$.
For a function $f(\xi,z)$ we write
$f(\xi,t\pm)=\lim_{\eps\dn 0}f(\xi,t\pm i\eps)$.
Recall that $\xi_z=\xi\sd\{z\}$.

\begin{lemma}\label{phi-dot-lem}
Let $w\in\Dom_\d^{\w,\itr}$ and write $u=w-\sfrac\d2$
and $v=w+\sfrac\d2$.  Then
\be\label{phi-up-dot}
 \begin{split}
\dot\Phi_\d^\up(w)=\dot\Phi_\d^\up(u)
&=\tfrac1{\d\sqrt2}\big(
e^{i\pi/4}\Phi_\d^\lt(w)-e^{-i\pi/4}\Phi_\d^\rt(w)\big)\\
&\quad +
\tfrac1{\d\sqrt2}\big(
e^{-i\pi/4}\Phi_\d^\rt(u)-e^{i\pi/4}\Phi_\d^\lt(u)\big).
\end{split}
\ee
and
\be\label{phi-dn-dot}
 \begin{split}
\dot\Phi_\d^\dn(w)=\dot\Phi_\d^\dn(v)
&=\tfrac1{\d\sqrt2}\big(
e^{-i\pi/4}\Phi_\d^\lt(w)-e^{i\pi/4}\Phi_\d^\rt(w)\big)\\
&\quad +
\tfrac1{\d\sqrt2}\big(
e^{i\pi/4}\Phi_\d^\rt(v)-e^{-i\pi/4}\Phi_\d^\lt(v)\big).
\end{split}
\ee
\end{lemma}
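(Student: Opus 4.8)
The plan is to compute $\dot\Phi_\d^\up(w)$ directly from the definition
$\dot\Phi_\d^\up(w)=\lim_{\eps\dn0}\tfrac1\eps(\Phi_\d^\up(w+i\eps)-\Phi_\d^\up(w))$,
keeping careful track of the (almost surely finite) Poisson configuration near the
segment $(w,w+i\eps)$. First I would note that since $\varphi^\up(\xi;w)=\varphi^\up(\xi;u)$ and $\varphi^\up(\xi;w)$ has support disjoint from $\xi$, the function $\Phi_\d^\up$ is differentiable in the vertical direction and $\dot\Phi_\d^\up(w)=\dot\Phi_\d^\up(u)$; this gives the first equality in~\eqref{phi-up-dot}. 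For the main identity, I would condition on the event $\{\xi(w,w+i\eps)\le 1\}$, which has probability $1-O(\eps^2)$, and split into the two cases $\xi(w,w+i\eps)=0$ and $\xi(w,w+i\eps)=1$. When there is no point of $\xi$ in the interval, the winding angle and the loop count are locally constant, so that contribution is $O(\eps)$ smaller and contributes nothing to the derivative beyond terms that cancel. When there is exactly one point — a bridge of $\xi^\w$ in the white interval, appearing at rate $\tfrac1{\d\sqrt2}$ — the interface $\g$ is forced to turn at that bridge: passing $w$ going $\up$ just below the bridge corresponds to passing $w+i\eps$ going $\lt$ or $\rt$ just above it (depending on which way $\g$ routes around the bridge), and the loop count changes by $\pm1$ exactly as in the case analysis of Lemma~\ref{turn-lem}. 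Matching these configurations is the same bijective bookkeeping as before (Figures~\ref{prehol_fig-1}--\ref{prehol_fig-3}), now applied to a pair of points at vertical distance $\eps$ rather than horizontal distance $\d/2$.

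Concretely, I would write, up to $O(\eps^2)$,
\be
\Phi_\d^\up(w+i\eps)-\Phi_\d^\up(w)
=\hat\EE\big[(\varphi^\up(\xi;w+i\eps)-\varphi^\up(\xi;w))
\one\{\xi(w,w+i\eps)=1\}\big]+O(\eps^2),
\ee
and then evaluate the single-bridge contribution. Here one uses three inputs: (i) conditionally on $\xi(w,w+i\eps)=1$ the location of the bridge is uniform, so after dividing by $\eps$ one gets the rate $\tfrac1{\d\sqrt2}$ as a prefactor and the conditional law of everything away from the bridge converges to $\hat\EE[\,\cdot\,]$; (ii) the identity $W^\up_{\g}(w+i\eps)=W^\lt_\g(w)\mp\tfrac\pi2$ (resp.\ with $\rt$) for the appropriate routing of $\g$ around the bridge, which after multiplying by $\tfrac i2$ in the exponent produces the phases $e^{\mp i\pi/4}$; (iii) the loop-count identities $L(\xi')-L(\xi)=\pm1$ that convert the $(\sqrt2)^{L}$ weights in $\hat\EE$ into the $(\sqrt2)^{L(\xi_w)}$ weights defining $\Phi_\d^\lt,\Phi_\d^\rt$, exactly as in the last display of the proof of Lemma~\ref{turn-lem}. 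Summing the contribution from the bridge in the interval $(w,w+i\eps)$ (which produces the $\Phi_\d^\lt(w),\Phi_\d^\rt(w)$ terms with one sign pattern) and recognizing that the reference configuration for $\Phi_\d^\up$ at the lower point $w$ can equally be analyzed via a bridge just below $w$, i.e.\ in $(u,w)$-type adjacency, yields the two groups of terms in~\eqref{phi-up-dot}; the signs flip between the two groups because the bridge lies above $w$ in one case and the turning orientation of $\g$ is reversed. The formula~\eqref{phi-dn-dot} for $\dot\Phi_\d^\dn$ follows by the symmetric argument (reflecting $\up\leftrightarrow\dn$ and $u\leftrightarrow v$, equivalently conjugating and exchanging the roles of $e^{i\pi/4}$ and $e^{-i\pi/4}$).

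The main obstacle I expect is item (iii): getting the four signs and the assignment of which of $\Phi_\d^\lt(w),\Phi_\d^\rt(w),\Phi_\d^\lt(u),\Phi_\d^\rt(u)$ appears with which phase $e^{\pm i\pi/4}$ exactly right. This requires being scrupulous about (a) the orientation convention that $\g$ has black on its left, (b) whether a bridge in $(w,w+i\eps)$ is a primal ($\xi^\b$, on black intervals) or dual ($\xi^\w$, on white intervals) bridge — for $w$ white the relevant local perturbation is a $\xi^\w$-bridge — and (c) the direction in which the winding angle is measured (from $z$ forward to $b_\d$), which fixes whether turning left adds or subtracts $\pi/2$. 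A clean way to organize this is to set up, once and for all, the local picture near $w$ with the four possible routings of $\g$ (the analogues of events $A,A',B,B',C,C'$ from Lemma~\ref{turn-lem}) and read off each of the four phase-times-observable terms from that picture; the boundary cases $w\pm\sfrac\d2\in\partial^\rv\Dom_\d$ or $\in\partial^\rh\Dom_\d$ are handled by the convention already fixed after~\eqref{phi-ltrt} that $\Phi_\d^\lt,\Phi_\d^\rt$ vanish or are defined without removing $z$, so no separate argument is needed there. Everything else — interchanging limit and expectation, the $O(\eps^2)$ bound on two-or-more points, uniformity of the bridge location — is routine for Poisson processes on a bounded domain.
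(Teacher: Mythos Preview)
There is a genuine gap in your plan: you only track Poisson points in the \emph{white} interval $(w,w+i\eps)$ and assert that when $\xi(w,w+i\eps)=0$ the contribution to the difference quotient vanishes.  That is false.  Since $\varphi^\up(\xi;w)=\varphi^\up(\xi;u)$, the integrand $\varphi^\up(\xi;w+i\eps)-\varphi^\up(\xi;w)$ is nonzero as soon as the interface changes its behaviour anywhere along the medial segment between heights $0$ and $\eps$, and this happens \emph{either} if there is a bridge of $\xi^\w$ in $(w,w+i\eps)$ \emph{or} if there is a cut of $\xi^\b$ in $(u,u+i\eps)$.  Both events have probability of order $\eps$ and both contribute to the derivative at leading order.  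The paper's proof splits accordingly into two pieces, one for $\one\{\xi(w,w+i\eps)>0\}$ and one for $\one\{\xi(u,u+i\eps)>0\}$, and it is precisely the second piece---the cut at $u$---that produces the $\Phi_\d^\lt(u),\Phi_\d^\rt(u)$ terms in~\eqref{phi-up-dot}.

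Your attempt to recover those $u$-terms via ``a bridge just below $w$, i.e.\ in $(u,w)$-type adjacency'' does not correspond to any actual mechanism: $(u,w)$ is a horizontal half-edge, not a vertical interval where a Poisson point can land, and there is no second bridge contribution in the computation of $\dot\Phi_\d^\up(w)$ as a one-sided vertical derivative.  The fix is simple in principle---repeat your single-point analysis for a cut in $(u,u+i\eps)$, using the black analogue of the $A,B,C$ case split (now the interface turns at a cut of $\xi^\b$ rather than at a bridge of $\xi^\w$, which reverses the sign pattern and gives $e^{-i\pi/4}\Phi_\d^\rt(u)-e^{i\pi/4}\Phi_\d^\lt(u)$)---but it needs to be said, and your statement that the $\xi(w,w+i\eps)=0$ case ``contributes nothing'' must be retracted.
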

\begin{proof}
The first equalities in~\eqref{phi-up-dot} and~\eqref{phi-dn-dot}
hold since  $\Phi_\d^\up(w)=\Phi_\d^\up(u)$ and 
$\Phi_\d^\dn(w)=\Phi_\d^\dn(v)$.  We prove~\eqref{phi-up-dot}, the
other claim~\eqref{phi-dn-dot} is similar.  
We have that
\be\label{der-1}
Z_\d\frac{\Phi_\d^\up(w+i\eps)-\Phi_\d^\up(w)}{\eps}
=\tfrac1\eps\EE[(\sqrt2)^{L(\xi)}
(\varphi^\up(\xi;w+i\eps)-\varphi^\up(\xi;w))].
\ee
Note that $\varphi^\up(\xi;w+i\eps)-\varphi^\up(\xi;w)=0$
unless either $\xi(w,w+i\eps)>0$ or $\xi(u,u+i\eps)>0$.
The probability that both these happen is $O(\eps^2)$ and may therefore
be ignored.  Also recall that $\varphi^\up(\xi;w)=\varphi^\up(\xi;u)$
for $w$ and $u$ as specified.  Thus the right-hand-side
of~\eqref{der-1} equals
\be
\begin{split}
&\tfrac1\eps\EE[(\sqrt2)^{L(\xi)}
(\varphi^\up(\xi;w+i\eps)-\varphi^\up(\xi;w))
\one\{\xi(w,w+i\eps)>0\}]\\
&\quad+
\tfrac1\eps\EE[(\sqrt2)^{L(\xi)}
(\varphi^\up(\xi;u+i\eps)-\varphi^\up(\xi;u))
\one\{\xi(u,u+i\eps)>0\}]+o(1).
\end{split}
\ee
This converges to
\be
\begin{split}
&\tfrac1{\d\sqrt2}\EE[(\sqrt2)^{L(\xi_w)}
(\varphi^\up(\xi_w;w+)-\varphi^\up(\xi_w;w-))]\\
&\quad+
\tfrac1{\d\sqrt2}\EE[(\sqrt2)^{L(\xi_u)}
(\varphi^\up(\xi_u;u+)-\varphi^\up(\xi_u;u-))].
\end{split}
\ee

Consider $\varphi^\up(\xi_w;w+)-\varphi^\up(\xi_w;w-)$.
We refer again to Figures~\ref{prehol_fig-1}, \ref{prehol_fig-2} and
\ref{prehol_fig-3} and the events $A,B,C,A',B',C'$ depicted there, as
well as the relation~\eqref{angles-1} between winding angles.
We have that
\be
\begin{split}
\mbox{for }\xi\in A,\quad
&\varphi^\up(\xi_w;w+)-\varphi^\up(\xi_w;w-)=0\\
&\quad=\varphi^\lt(\xi_w;w)e^{i\pi/4}-\varphi^\rt(\xi_w;w)e^{-i\pi/4},
\end{split}
\ee
\be
\begin{split}
\mbox{for }\xi\in B,\quad
&\varphi^\up(\xi_w;w+)-\varphi^\up(\xi_w;w-)=
\varphi^\up(\xi_w;w+)\\
&\quad=\varphi^\lt(\xi_w;w)e^{i\pi/4}\\
&\quad=\varphi^\lt(\xi_w;w)e^{i\pi/4}-\varphi^\rt(\xi_w;w)e^{-i\pi/4},
\end{split}
\ee
\be
\begin{split}
\mbox{for }\xi\in C,\quad
&\varphi^\up(\xi_w;w+)-\varphi^\up(\xi_w;w-)=
-\varphi^\up(\xi_w;w-)\\
&\quad=-\varphi^\rt(\xi_w;w)e^{-i\pi/4}\\
&\quad=\varphi^\lt(\xi_w;w)e^{i\pi/4}-\varphi^\rt(\xi_w;w)e^{-i\pi/4}.
\end{split}
\ee
That is to say, we have the identity
\be
\varphi^\up(\xi_w;w+)-\varphi^\up(\xi_w;w-)=
\varphi^\lt(\xi_w;w)e^{i\pi/4}-\varphi^\rt(\xi_w;w)e^{-i\pi/4}.
\ee
This gives
\be
\EE\big[(\sqrt2)^{L(\xi_w)}
\big(\varphi^\up(\xi_w;w+)-\varphi^\up(\xi_w;w-)\big)\big]=
Z_\d(\Phi_\d^\lt(w)e^{i\pi/4}-\Phi_\d^\rt(w)e^{-i\pi/4}).
\ee
Similar considerations give
\be
\EE\big[(\sqrt2)^{L(\xi_u)}
\big(\varphi^\up(\xi_u;u+)-\varphi^\up(\xi_u;u-)\big)\big]=
Z_\d(\Phi_\d^\rt(w)e^{-i\pi/4}-\Phi_\d^\lt(w)e^{i\pi/4}).
\ee
Combining these and dividing by $Z_\d$ gives the
claim~\eqref{phi-up-dot}.   
\end{proof}

\begin{proof}[Proof of Theorem~\ref{s-hol-thm-fk}]
As already noted,
properties~\eqref{s-hol-def-w-1} and~\eqref{s-hol-def-u-1}
are immediate, so we need to establish~\eqref{s-hol-def-w-2}
and~\eqref{s-hol-def-u-2}.
We check the case $z=w\in\Dom_\d^{\w,\itr}$, the case
$z\in\Dom_\d^{\b,\itr}$ being similar.
Writing $u=w-\sfrac\d2$ 
and $v=w+\sfrac\d2$,  we need to show that
\be\label{phi-dots}
\begin{split}
\dot\Phi_\d^\up(w)&=\dot\Phi_\d^\up(u)=
\tfrac{i}{\d}\big(\Phi_\d^\dn(w)-\Phi_\d^\dn(u)\big),\\
\dot\Phi_\d^\dn(w)&=\dot\Phi_\d^\dn(v)=
\tfrac{i}{\d}\big(\Phi_\d^\up(v)-\Phi_\d^\up(w)\big).
\end{split}
\ee
But for any 
$z\in\Dom_\d^{\w,\itr}\cup\Dom_\d^{\b,\itr}$ 
we have, by Lemma~\ref{turn-lem},  firstly
\be\begin{split}
e^{i\pi/4}\Phi_\d^\lt(z)-e^{-i\pi/4}\Phi_\d^\rt(z)
&=e^{i\pi/2} e^{-i\pi/4}\Phi_\d^\lt(z)-e^{-i\pi/2}e^{i\pi/4}\Phi_\d^\rt(z)\\
&=i\cdot (e^{-i\pi/4}\Phi_\d^\lt(z)+e^{i\pi/4}\Phi_\d^\rt(z))\\
&=i\sqrt2\cdot \Phi_\d^\dn(z),
\end{split}\ee
and secondly
\be\begin{split}
e^{i\pi/4}\Phi_\d^\rt(z)-e^{-i\pi/4}\Phi_\d^\lt(z)
&=e^{i\pi/2} e^{-i\pi/4}\Phi_\d^\rt(z)-e^{-i\pi/2}e^{i\pi/4}\Phi_\d^\lt(z)\\
&=i\cdot (e^{-i\pi/4}\Phi_\d^\rt(z)+e^{i\pi/4}\Phi_\d^\lt(z))\\
&=i\sqrt2\cdot \Phi_\d^\up(z).
\end{split}\ee
Putting these into Lemma~\ref{phi-dot-lem} gives the
result. 
\end{proof}

\section{The spin-observable}
\label{spin-obs-sec}

\subsection{Definition}

Let $\Dom_\d$ be a discrete \emph{dual} domain (see
Section~\ref{dom-sec}).  
We work with the random-parity representation~\eqref{rpr}
in $\Dom_\d^\b$, and as before we set $h=J=\tfrac{1}{2\d}$.  
Recall that the set $\xi=\xi^\w\se\Dom_\d^\w$ of bridges 
is a Poisson process with rate $J$.
Define the `lower boundary' of $\Dom_\d^\b$ as
\[
\partial^-\Dom_\d^\b=\{z\in\partial\Dom_\d^\b:
z-i\eps\not\in\Dom_\d^\b
\mbox{ for all $\eps>0$ small enough}\},
\]
and similarly the `upper boundary' of $\Dom_\d^\b$ as
\[
\partial^+\Dom_\d^\b=\{z\in\partial\Dom_\d^\b:
z+i\eps\not\in\Dom_\d^\b
\mbox{ for all $\eps>0$ small enough}\},
\]
Thus 
$\partial^-\Dom_\d^\b\cup \partial^+\Dom_\d^\b=\partial^\rh\Dom_\d^\b$.

We take two distinct points $a_\d,b_\d$ 
on the boundary $\partial\Dom_\d$ with
$a_\d\in\partial^\rv\Dom_\d^\w\cup\partial^\rh\Dom_\d^\b$ either a
white point on the `sides' or a black point on the `top or bottom',
and $b_\d\in\partial^-\Dom_\d^\b$ on the lower boundary.
In the case when $a_\d\in\partial^\rv\Dom_\d^\w$ we let
$a_\d^\itr=a_\d\pm\sfrac\d2\in\Dom_\d^\b$ 
be the black point in $\Dom_\d$ next to
$a_\d$, so that  $(a_\d,a_\d^\itr)$ is a directed half-edge 
pointing horizontally into $\Dom_\d$,
as in Figure~\ref{spin-dom-fig-1}.
If $a_\d\in\partial^\rh\Dom_\d^\b$ we
let $a_\d^\itr=a_\d$ but sometimes interpret
$a_\d^\itr=a_\d\pm0i$ as a point `just inside' $\Dom_\d^{\b,\itr}$.

\begin{figure}[hbt]
\centering
\includegraphics{fermionic.47}
\hspace{1cm}
\includegraphics{fermionic.48}
\caption{
Dual domain $\Dom_\d$ with $\partial\Dom_\d$ drawn dashed and
$\Dom_\d^\b$ drawn solid.
\emph{Left:} A labelling   $\psi_{a_\d,b_\d}^\xi$, with
  points $u$ satisfying $\psi(u)=1$  marked fat, with blue colour
  for loops and red for the path $\g$.  In this case 
  $a_\d\in\partial^\rv\Dom_\d^\w$.
\emph{Right:} Same domain with a labelling $\psi_{a_\d,z}^\xi$ for 
$z\in\Dom_\d^{\w,\itr}$.   In this case $a_\d\in\partial^+\Dom_\d^\b$.}
\label{spin-dom-fig-1}
\end{figure}

For $a_\d$ as above and for a fixed
$z\in\Dom_\d^\b$, possibly $z=b_\d$, we let
$\psi=\psi_{a_\d,z}^\xi:\Dom_\d^\b\to\{0,1\}$ be 
a function satisfying the following:
\begin{enumerate}
\item $\psi(a_\d^\itr)=\psi(z)=1$ if $z\neq a_\d^\itr$, 
respectively $=0$ if $z=a_\d^\itr$,
\item $\psi(u)=0$ for all $u\in\partial^\rh\Dom^\b_\d\sm\{a_\d,z\}$,
\item for $u\in\Dom_\d^{\b,\itr}$ we have that  
$\psi(u+\eps i)=1-\psi(u-\eps i)$ for all small enough $\eps>0$  
 if \emph{either} $u\pm\sfrac\d2\in\xi$
(that is, $u$ is an endpoint of a bridge) \emph{or}
$u\in(\{a_\d^\itr\}\sd\{z\})$;  and
\item the set $I(\psi)=\{u\in\Dom_\d^\b:\psi(u)=1\}$ is closed. 
\end{enumerate}
Thus $\psi$ is a random-parity configuration with sources
$A=\{a_\d^\itr,z\}$ and boundary condition 0 on
$\partial^\rh\Dom_\d^\b$.
It is easy to see that there is at most one function $\psi_{a_\d,z}^\xi$
satisfying the above constraints, for each given $\xi$ (and $a_\d,z$).
We let $\cA(a_\d,z)$ 
be the event (set of $\xi$:s) such that there
exists such a $\psi$.  We also extend the definition of $\cA(a_\d,z)$ to
allow $z\in\Dom_\d^{\w}$ by letting
\be
\cA(a_\d,z)=\cA(a_\d,z-\sfrac\d2)\cup\cA(a_\d,z+\sfrac\d2)
\mbox{ if } z\in\Dom_\d^{\w}.
\ee
Note that this union is disjoint.

It is worth stating precisely a (necessary and sufficient) condition for
$\xi$ to belong to $\cA({a_\d,z})$ when $z\in\Dom_\d^\b$.
To state the condition, let
\[
V(u)=\{u'\in\Dom_\d^\b:[u,u']\se\Dom_\d^\b\},
\mbox{ for } u\in\Dom_\d^\b,
\]
be the maximal vertical line contained in $\Dom_\d^\b$ and containing
$u$.  Let 
\be
S_{a_\d,z}^\xi(u)=\{v\in V(u): v\pm\sfrac\d2\in\xi\}
   \cup (\{a_\d^\itr\}\sd\{z\})
\ee
be the set of points in $V(u)$ where $\psi$ is required to change value.
Then, for $z\in\Dom_\d^\b$,
\be
\xi\in\cA({a_\d,z})\Leftrightarrow
|S_{a_\d,z}^\xi(u)|\mbox{ is even for all } u\in\Dom_\d^\b.
\ee
In words,
$\psi$ must switch (from 0 to 1 or from 1 to 0)
an even number of times on each line $V(u)$.

If $u\in\Dom_\d^\b\sm\{a_\d^\itr\}$ 
and $\xi\in\cA({a_\d,u})$ then $\psi=\psi^\xi_{a_\d,u}$
contains a unique path $\g(\xi)$ from $a_\d$ to $u$ which traverses 
the half-edge $(a_\d,a_\d^\itr)$ if $a_\d\in\partial^\rv\Dom_\d^\w$, 
intervals along which $\psi=1$, as well as bridges of $\xi$.  
For $w\in\Dom_\d^{\w}$ and $\xi\in\cA({a_\d,w})$
we complete $\g$ to form a path to $w$ by including the half-edge
$(w-\sfrac\d2)\to w$ (if $\xi\in\cA(a_\d,w-\sfrac\d2)$)
respectively
$w\lt(w+\sfrac\d2)$ (if $\xi\in\cA(a_\d,w+\sfrac\d2)$).
See Figure~\ref{spin-dom-fig-1}.
In the cases when $z\in\{a_\d,a_\d^\itr\}$ the path $\g$ is
degenerate, and we interpret it as a small arrow (or half-edge)
pointing from $a_\d$ to $a_\d^\itr$ if $z=a_\d^\itr$, alternatively as
a small path making an angle $\pi$ turn if $z=a_\d$.

We define $W^{a_\d,z}_{\g(\xi)}$ to be the
winding-angle of $\g(\xi)$ from $a_\d$ to $z$
(with $W^{a_\d,a_\d^\itr}=0$ and $W^{a_\d,a_\d}=\pi$).  It is important to note
that, in the case when $z=b_\d$ is on the boundary, then 
$W^{a_\d,b_\d}_{\g(\xi)}$ does not depend on $\xi$ (one cannot wind around the
boundary, and $a_\d,b_\d$ have fixed orientations), i.e.\ it takes a fixed
value which we denote $W^{a_\d,b_\d}$.
 
Write $1^\w(z)$ for the indicator that $\g$ ends with a half-edge
(i.e.\ either $z\in\Dom_\d^\w$ or $a_\d\in\partial^\rv\Dom_\d^\w$
and $z=a_\d^\itr$).
Define the random variable
\be
X^{a_\d,z}(\xi)=\one_{\cA(a_\d,z)}(\xi)
\exp(-2h|I(\psi_{a_\d,z}^\xi)|) 
(\tfrac1{\sqrt2})^{1^\w(z)}.
\ee

\begin{definition}\label{spin-obs-def}
Write $\EE=\EE_{0,1/2\d}$ for the law of $\xi=\xi^\w$ and let
$(\Dom_\d,a_\d,b_\d)$ be as above.
Define the spin-observable
\be
F^{\mathrm{sp}}_\d(z)=
\frac{\EE[\exp(-\tfrac{i}2W^{a_\d,z}_{\g(\xi)}) X_{\g(\xi)}^{a_\d,z}]}
{\EE[\exp(-\tfrac{i}2W^{a_\d,b_\d}_{\g(\xi)}) X_{\g(\xi)}^{a_\d,b_\d}]},
\quad z\in\Dom_\d^\b\cup\Dom_\d^{\w}.
\ee
\end{definition}

Note that we have defined this observable using the random-parity
representation, whose classical analogue is the random-current
representation of~\cite{aiz82} rather than the high-temperature expansion
used by Chelkak and Smirnov~\cite{ch-sm}. 
The high-temperature expansion is essentially the random-current
representation `modulo two'.

\subsection{S-holomorphicity}

In this section we show the following result.

\begin{theorem}\label{s-hol-thm-spin}
Let $F^{\mathrm{sp}}_\d$ be the spin-observable in a primal domain 
$(\Dom_\d,a_\d,b_\d)$ with two marked points on the
boundary, as above.  Then $F^{\mathrm{sp}}_\d$ is s-holomorphic at
all $z\in(\Dom_\d^{\b,\itr}\cup\Dom_\d^{\w,\itr})\sm\{a_\d^\itr,a_\d^\itr\pm\sfrac\d2\}$.
\end{theorem}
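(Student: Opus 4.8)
The plan is to mimic the proof of Theorem~\ref{s-hol-thm-fk} as closely as possible, replacing the loop-counting bookkeeping of the \fk-representation with the weight $\exp(-2h|I(\psi)|)$ of the random-parity representation. First I would introduce, for each $z\in\Dom_\d^\b\cup\Dom_\d^\w$ and each admissible direction $\a$, an observable $\Phi_\d^\a(z)$ defined as the (normalized) expectation of $\exp(-\tfrac i2 W^\a_{\g(\xi)})$ times the parity weight, summed over those configurations in which $\g$ passes $z$ in direction $\a$ (splitting, as in the \fk-case, into the two transversal positions for $\a\in\{\up,\dn\}$ and into `just above/below' for $\a\in\{\lt,\rt\}$). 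The geometric normalizations $(1/\sqrt2)^{1^\w(z)}$ built into $X^{a_\d,z}$ are designed precisely so that the half-edge bookkeeping works out; I would check that with this choice $F_\d^{\mathrm{sp}}(z)=\Phi_\d^\up(z)+\Phi_\d^\dn(z)$, that $\Phi_\d^\a(z)\in\ell(\a)$ for each $\a$, and that the ``same projection on adjacent points'' conditions~\eqref{s-hol-def-w-1},~\eqref{s-hol-def-u-1} hold for free from the fact that a configuration contributing to $\Phi^\up$ at a black $u$ contributes identically at its right white neighbour $u+\sfrac\d2$ (and symmetrically for $\dn$), since the path and its winding are unchanged and $|I(\psi)|$ and $1^\w$ are unchanged.

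The heart of the argument is then two lemmas paralleling Lemma~\ref{turn-lem} and Lemma~\ref{phi-dot-lem}. For the analog of Lemma~\ref{turn-lem}, I would set up a measure-preserving involution $\xi\mapsto\xi_z=\xi\sd\{z\text{-bridge}\}$ (adding or removing the bridge adjacent to $z$, as in the \fk-proof) and partition the configurations where $\g$ passes $z$ vertically into cases $A,B,C$ exactly as in Figures~\ref{prehol_fig-1}--\ref{prehol_fig-3}, with corresponding cases $A',B',C'$ after the switch. The key point replacing ``$L(\xi)=L(\xi_z)\pm1$'' is that toggling this bridge changes $|I(\psi)|$ by a length which, after integrating over the Poisson process, contributes exactly the factor $1/\sqrt2$ because the rate is $J=\tfrac1{2\d}$ and the relevant short interval has length $\to0$; more precisely, the switch between a configuration where the path goes $\up$ and one where it splits into $\lt,\rt$ changes the number of half-edges $1^\w$, and the $(1/\sqrt2)^{1^\w}$ factor absorbs this, while $|I(\psi)|$ itself is continuous across the toggle so contributes no discontinuity. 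The winding-angle relations $W^\up=W^\lt+\pi/2=W^\rt-\pi/2$ are identical to~\eqref{angles-1}. Summing the three cases gives~\eqref{turn-eq} verbatim. For the analog of Lemma~\ref{phi-dot-lem}, I would compute $\dot\Phi_\d^\up(w)$ by the same $\tfrac1\eps$-difference-quotient argument: $\varphi^\up(\xi;w+i\eps)-\varphi^\up(\xi;w)$ vanishes unless a bridge-endpoint or source lies in $(w,w+i\eps)$ or in $(u,u+i\eps)$; the event of a bridge in a tiny interval has probability $J\eps=\tfrac{\eps}{2\d}$, producing the prefactor $\tfrac1{2\d}$, but after the induced toggle of the adjacent bridge the expectation is against $\xi_w$ with the usual $1/\sqrt2$ correction, giving net $\tfrac1{\d\sqrt2}$, exactly as in~\eqref{phi-up-dot}. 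One must also check there is no contribution from the possibility that the new bridge endpoint coincides with $a_\d^\itr$ or $z$ being varied — but that is excluded precisely by the hypothesis $z\notin\{a_\d^\itr, a_\d^\itr\pm\sfrac\d2\}$, which is why those points are removed from the statement.

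Once both lemmas are in hand, the proof of Theorem~\ref{s-hol-thm-spin} is the same two-line linear-algebra computation as in the proof of Theorem~\ref{s-hol-thm-fk}: from~\eqref{turn-eq} one derives $e^{i\pi/4}\Phi_\d^\lt(z)-e^{-i\pi/4}\Phi_\d^\rt(z)=i\sqrt2\,\Phi_\d^\dn(z)$ and $e^{i\pi/4}\Phi_\d^\rt(z)-e^{-i\pi/4}\Phi_\d^\lt(z)=i\sqrt2\,\Phi_\d^\up(z)$, substitute into the formulas for $\dot\Phi_\d^\up$ and $\dot\Phi_\d^\dn$ from the analog of Lemma~\ref{phi-dot-lem}, and read off~\eqref{s-hol-def-w-2} at white points and~\eqref{s-hol-def-u-2} at black points. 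The normalization by the denominator $\EE[\exp(-\tfrac i2 W^{a_\d,b_\d})X^{a_\d,b_\d}]$ is a fixed nonzero constant (since $W^{a_\d,b_\d}$ is deterministic, as noted before Definition~\ref{spin-obs-def}), so it does not affect s-holomorphicity.

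The main obstacle I anticipate is getting the combinatorics of the parity weight and the half-edge normalizations exactly right at the toggle: unlike the \fk-case, where $L(\xi)-L(\xi_z)\in\{+1,-1\}$ is a clean integer, here one must argue that the continuous weight $\exp(-2h|I(\psi)|)$ matches on the two sides of the involution up to the explicit $\sqrt2$ powers, and that the source at $z$ (when $z$ itself is being moved, in the difference-quotient computation) interacts correctly with an added or removed neighbouring bridge — in particular that the configuration on each maximal vertical line still has an even number of switching points, so that the admissibility events $\cA(a_\d,\cdot)$ transform as claimed. This parity/admissibility bookkeeping near $z$, together with checking that the excluded points $\{a_\d^\itr,a_\d^\itr\pm\sfrac\d2\}$ are exactly where the argument breaks (because there the toggle could create or destroy the source structure rather than just a loop), is where the real care is needed; the rest is a faithful transcription of Section~\ref{fk-obs-sec}.
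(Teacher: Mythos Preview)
Your plan has a genuine gap in the derivative computation.  In the \fk-case, $\dot\Phi_\d^\up(w)$ arises entirely from Poisson events: either a bridge in $(w,w+i\eps)$ or a cut in $(u,u+i\eps)$, each occurring with rate $\tfrac1{\d\sqrt2}$.  In the spin representation there are \emph{no cuts}; only bridges, and those have rate $J=\tfrac1{2\d}$, not $\tfrac1{\d\sqrt2}$.  You cannot manufacture the second term in~\eqref{phi-up-dot} from a second Poisson process.  What you are missing is that the integrand itself depends on $z$ through the weight $\exp(-2h|I(\psi_{a_\d,z})|)$: when the source $z=u$ moves to $u+i\eps$ and no bridge intervenes, the interval length $|I(\psi)|$ changes by $\pm\eps$ (the sign determined by whether $\g$ reaches $u$ from above or below), so $e^{-2h|I(\psi_{a_\d,u+i\eps})|}-e^{-2h|I(\psi_{a_\d,u})|}=\pm2h\eps\,e^{-2h|I(\psi)|}+O(\eps^2)$.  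This ``case (i)'' contribution, with $2h=\tfrac1\d$, is what produces the $-\tfrac i\d F_\d^\dn(u)$ term.  The bridge event (``case (ii)'') then supplies the $+\tfrac i\d F_\d^\dn(v)$ term, but through a change of \emph{winding angle} (by $\pm\pi$ or $\pm3\pi$, in four sub-cases), not through a change of loop count or of $1^\w$.  Your statement that ``$\varphi^\up(\xi;w+i\eps)-\varphi^\up(\xi;w)$ vanishes unless a bridge-endpoint or source lies in $(w,w+i\eps)$ or in $(u,u+i\eps)$'' is simply false here.

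A secondary point: your proposed toggle argument for an analog of Lemma~\ref{turn-lem} is both unnecessary and misdirected.  In the paper the $F_\d^\a$ are \emph{defined} as projections $\mathrm{Proj}[F_\d;\ell(\a)]$, so the identities~\eqref{turn-eq} hold trivially; the actual work (Lemma~5.3) is to identify those projections with expectations over the events ``$\g$ reaches $z$ from below/above/left/right''.  There is no bridge to toggle at a black $u$, and $1^\w(z)$ is a $\{0,1\}$-valued indicator of whether the endpoint is white, not a running count of half-edges along $\g$, so your proposed $\sqrt2$-bookkeeping does not parse.  Once the projection interpretation and the continuous weight-variation are in place, the proof does reduce to the same linear algebra you describe at the end; but the mechanism by which the two halves of the derivative arise is genuinely different from the \fk-case.
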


Regarding the behaviour near $a_\d^\itr$, we note that half of
condition~\eqref{s-hol-def-u-1} in Definition~\ref{s-hol-def} 
holds at $a_\d^\itr$, but condition~\eqref{s-hol-def-u-2} fails.

Since $a_\d$ and $b_\d$ are fixed 
we will use the shorthands
\be
\cW^z(\xi)=W^{a_\d,b_\d}-W^{a_\d,z}_{\g(\xi)},
\qquad
X^z(\xi)=X^{a_\d,z}(\xi).
\ee
Note that $F^{\mathrm{sp}}_\d(z)$ is a real multiple of
\be\label{sp-obs-pf-eq}
F_\d(z)=\EE\big[\exp(\tfrac{i}2 \cW^z(\xi)) 
X^{z}(\xi)\big],
\ee
so it suffices to show s-holomorphicity of this $F_\d(z)$.

It will be useful to note the
following interpretation of the quantity $\cW^{z}(\xi)$.  
Imagine that we augment $\g$ with a curve $\hat\g$ in $\Dom_\d$ 
which starts at $z$
in the same direction that $\g$ ends, and which finishes at $b_\d$
(pointing down).  Let $W^{z,b_\d}_{\hat\g}$ denote its winding angle.
Then $\G=\g\cup\hat\g$ is a curve in
$\Dom_\d^\b$ from $a_\d$ to $b_\d$, 
thus $\G$ has winding angle $W^{a_\d,b_\d}$,
meaning that $\cW^{z}(\xi)$ is the winding-angle from $z$ to $b_\d$. 

We now turn to the proof of Theorem~\ref{s-hol-thm-spin}.   
Recall that we define $F_\d^\a(z)$ by
\[
F_\d^\a(z)=\mathrm{Proj}[ F_\d(z); \ell(\a)],\quad
\a\in\{\up,\dn,\lt,\rt\}.
\]
This means that the $F_\d^\a$
automatically satisfy the relations of the $\Phi_\d^\a$ 
in  Lemma~\ref{turn-lem}, that is:
\be\label{turn-eq-2}
\begin{split}
F_\d^\up(z)&=\tfrac{1}{\sqrt2}\big(
e^{i\pi/4}F_\d^\lt(z)+e^{-i\pi/4}F_\d^\rt(z)\big),\\
F_\d^\dn(z)&=\tfrac{1}{\sqrt2}\big(
e^{i\pi/4}F_\d^\rt(z)+e^{-i\pi/4}F_\d^\lt(z)\big).
\end{split}
\ee
If $z=u\in\Dom_\d^\b\sm\{a_\d^\itr\}$  
then $\g$ reaches $u$ either
from below or from above;  we write these events pictorially as
\[
\Big\{\upu\Big\}
\qquad\mbox{and}\qquad
\Big\{\dnu\Big\}.
\]
Similarly, if $z=w\in\Dom_\d^\w$ then $\g$ reaches $w$ either from the
left or the right, pictorially represented as
\[
\big\{\ltw\big\} \qquad\mbox{and}\qquad
\big\{\rtw\big\}.
\]

\begin{lemma}
If  $u\in\Dom_\d^\b\sm\{a_\d^\itr\}$ then
\be\label{Phi-u-proj}
\begin{split}
F_\d^\lt(u)&=\EE\Big[
\exp\big(\tfrac{i}2\cW^{u}\big) X^{u}
\one\Big\{\upu\Big\}
\Big],\mbox{ and}\\
F_\d^\rt(u)&=\EE\Big[
\exp\big(\tfrac{i}2\cW^{u}\big) X^{u}
\one \Big\{\dnu\Big\}
\Big],
\end{split}
\ee
and if  $w\in\Dom_\d^\w$,
\be\label{Phi-w-proj}
\begin{split}
F_\d^\up(w)&=\EE\Big[
\exp\big(\tfrac{i}2\cW^{w}\big) X^{w}
\one\big\{\rtw\big\}
\Big],\mbox{ and}\\
F_\d^\dn(w)&=\EE\Big[
\exp\big(\tfrac{i}2\cW^{w}\big) X^{w}
\one\big\{\ltw\big\}
\Big].
\end{split}
\ee
\end{lemma}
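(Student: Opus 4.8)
The plan is to decompose the expectation defining $F_\d(z)$ according to the direction from which $\g$ reaches $z$, and show that each piece lands in the appropriate line $\ell(\a)$. The mechanism is exactly the relationship between the winding angle $\cW^z(\xi)$ and the direction of the last step of $\g$: once we fix that $\g$ arrives at $z$ in direction $\a$, the quantity $\tfrac12\cW^z(\xi) = \tfrac12(W^{a_\d,b_\d} - W^{a_\d,z}_{\g(\xi)})$ is pinned down modulo $\pi$ (since the winding from $z$ to $b_\d$ is deterministic once the incoming direction is fixed, by the remark above that $\cW^z$ is the winding from $z$ to $b_\d$ along the completed curve $\G = \g \cup \hat\g$), and hence $\exp(\tfrac i2 \cW^z(\xi))$ is a fixed unimodular constant times $(\pm 1)$, lying on the line $\sqrt{\ol{\a'}}\RR$ for the appropriate arrow $\a'$. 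Concretely, I will use the indexing convention $\rt = i^0, \up = i^1, \lt = i^2, \dn = i^3$ already in force for the \fk-observable, and the same computation as in the paragraph after Definition~\ref{fk-obs-def} showing $\varphi^\a$ is a real multiple of $\sqrt{\ol\a}$.

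First I would treat the black case $u \in \Dom_\d^\b \sm \{a_\d^\itr\}$. Here $\g$ ends with a genuine (non-half) edge, so $1^\w(u) = 0$ and $X^u$ is real. On the event $\big\{\upu\big\}$, i.e.\ $\g$ reaches $u$ from below travelling upward, the completed curve $\G$ then turns and heads to $b_\d$; the winding of $\G$ from $u$ to $b_\d$ is deterministic given this incoming direction, so $\cW^u(\xi)$ takes a fixed value mod $2\pi$ that corresponds to $\g$ effectively "exiting $u$" in the $\up$ direction. Running the same identification as in the \fk\ case, $\exp(\tfrac i2 \cW^u)$ is then a real multiple of $\sqrt{\ol{\up}} \propto e^{-i\pi/4}\cdot i^{1/2}$... more carefully: the point is that $\exp(\tfrac i2 \cW^u) \in \ell(\lt)$ on the event $\big\{\upu\big\}$ because of how the winding angle of the completed curve $\G$ relates to the medial-arrow convention — I would verify the precise bookkeeping against the winding conventions fixed in the \fk\ section so that the two lemmas (Lemma~\ref{turn-lem} here reappearing as \eqref{turn-eq-2}) come out consistent. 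Thus $\EE[\exp(\tfrac i2 \cW^u) X^u \one\{\upu\}]$ is already in $\ell(\lt)$, and since $F_\d(u)$ is the sum of this term and the $\big\{\dnu\big\}$-term (which analogously lies in $\ell(\rt) = \RR$), and $\ell(\lt) \perp \ell(\rt)$, the projection $F_\d^\lt(u) = \mathrm{Proj}[F_\d(u);\ell(\lt)]$ picks out exactly the first term. This gives \eqref{Phi-u-proj}. The white case $w \in \Dom_\d^\w$ is identical in structure: there $1^\w(w) = 1$, so the extra factor $\tfrac1{\sqrt2}$ is carried along uniformly and does not affect which line each term lies on; on $\big\{\rtw\big\}$ the term lies in $\ell(\up)$ and on $\big\{\ltw\big\}$ in $\ell(\dn)$, and since $\ell(\up)\perp\ell(\dn)$ the projections split the sum cleanly, yielding \eqref{Phi-w-proj}.

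The main obstacle is purely bookkeeping: getting the winding-angle/arrow conventions exactly right so that the incoming direction of $\g$ at $z$ maps to the correct line $\ell(\a)$ with the correct $\a$ (note the apparent "swap" — arriving in direction $\up$ contributes to $F^\lt$, not $F^\up$). This comes from the fact that $\cW^z$ measures winding from $z$ to $b_\d$ along the \emph{forward-completed} curve, so the relevant arrow is (loosely) the reverse of the arrival direction, combined with the $\tfrac12$ in the exponent. I would pin this down by a small explicit check in one configuration — e.g.\ the degenerate case where $\g$ is a single edge into $b_\d$ — using the established relation $W^\a_{\g} = -\arg(\a) + 2\pi n$, and then note that all other configurations differ from a reference one only by an integer multiple of $2\pi$ in the winding, which changes $\exp(\tfrac i2 \cW^z)$ only by a sign and hence keeps it on the same line. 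Everything else is the orthogonality $\ell(\up) \perp \ell(\dn)$, $\ell(\lt) \perp \ell(\rt)$ together with linearity of expectation and of $\mathrm{Proj}[\,\cdot\,;\ell(\a)]$.
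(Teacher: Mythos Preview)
Your approach is correct and essentially identical to the paper's: split $F_\d(z)$ into the two arrival-direction pieces, observe that each piece lies on a fixed line $\ell(\a)$ because $\cW^z$ is determined modulo $2\pi$ by the arrival direction, and conclude by orthogonality of the two lines. The paper carries out the bookkeeping you flag directly---e.g.\ for $u\in\Dom_\d^\b$, if $\g$ arrives from below then $\cW^u=\pi+2\pi n$ so $\exp(\tfrac i2\cW^u)\in i\RR=\ell(\lt)$, and if from above then $\cW^u=0+2\pi n$ so $\exp(\tfrac i2\cW^u)\in\RR=\ell(\rt)$---rather than deferring it.
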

\begin{proof}
We show~\eqref{Phi-u-proj}, the argument for~\eqref{Phi-w-proj} is
similar. 
Certainly the two terms on the right-hand-sides of~\eqref{Phi-u-proj}
sum to $F_\d(u)$.  Moreover, if $\g$ reaches $u$ from below
then $\cW^{u}=\pi+2\pi n$ for some $n=n(\xi)\in\ZZ$, and if 
$\g$ reaches $u$ from above
then $W^{u}=0+2\pi n$ for some $n=n(\xi)\in\ZZ$.  
Thus the two terms belong to
$\ell(\lt)$ and $\ell(\rt)$ respectively, and these two lines being
perpendicular,  the claim~\eqref{Phi-u-proj} follows.  
\end{proof}

\begin{proposition}
Conditions~\eqref{s-hol-def-w-1} and~\eqref{s-hol-def-u-1} in
Definition~\ref{s-hol-def} hold at all
$z\in(\Dom_\d^{\b,\itr}\cup\Dom_\d^{\w,\itr})\sm\{a_\d^\itr,a_\d^\itr\pm\sfrac\d2\}$. 
\end{proposition}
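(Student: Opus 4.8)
The plan is to derive every identity in \eqref{s-hol-def-w-1} and \eqref{s-hol-def-u-1} from the projection formulas \eqref{Phi-u-proj}--\eqref{Phi-w-proj} of the preceding lemma together with the turning identities \eqref{turn-eq-2}, by matching up the configurations contributing to $F_\d$ at a black point with those contributing to $F_\d$ at an adjacent white point.

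First I would reduce the claim. Each of the identities in \eqref{s-hol-def-w-1}--\eqref{s-hol-def-u-1} has the shape $F_\d^\a(u)=F_\d^\a(w)$ for an adjacent pair consisting of a black point $u$ and a white point $w$, with $\a\in\{\up,\dn\}$: reading \eqref{s-hol-def-w-1} at a white point $w$ produces exactly the first identity of \eqref{s-hol-def-u-1} at $u=w-\sfrac\d2$ and the second identity of \eqref{s-hol-def-u-1} at $u=w+\sfrac\d2$. So it suffices to establish, for each black $u\neq a_\d^\itr$ whose relevant white neighbours lie in $\ol{\Dom_\d}$, the two edge relations
\[
F_\d^\up(u)=F_\d^\up\bigl(u+\sfrac\d2\bigr),\qquad
F_\d^\dn(u)=F_\d^\dn\bigl(u-\sfrac\d2\bigr).
\]
Removing $a_\d^\itr$ and $a_\d^\itr\pm\sfrac\d2$ from the statement is precisely what ensures that every black point occurring here is $\neq a_\d^\itr$, so that \eqref{Phi-u-proj} applies to it and $\g$ is never the degenerate path based at the source. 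I would prove the first edge relation in detail; the second is its left--right mirror image.

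Fix such a $u$ and put $w=u+\sfrac\d2$. Using the disjoint decomposition $\cA(a_\d,w)=\cA(a_\d,u)\sqcup\cA(a_\d,u+\d)$ and the prescription that completes $\g$ at a white point by a half-edge from the neighbouring black point, the part of the expectation \eqref{Phi-w-proj} defining $F_\d^\up(w)$ — the part on which $\g$ reaches $w$ along the rightward half-edge out of $u$ — is exactly the restriction to $\cA(a_\d,u)$. On $\cA(a_\d,u)$ one has $\psi_{a_\d,w}^\xi=\psi_{a_\d,u}^\xi$, hence $|I(\psi_{a_\d,w}^\xi)|=|I(\psi_{a_\d,u}^\xi)|$ and $X^w=\tfrac1{\sqrt2}X^u$ (the factor $\tfrac1{\sqrt2}$ being $(\tfrac1{\sqrt2})^{1^\w(w)}$, while $1^\w(u)=0$ since $u\neq a_\d^\itr$). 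Now split $\cA(a_\d,u)$ into the sub-events where $\g$ reaches $u$ from below and from above. Appending the horizontal half-edge $u\to w$ turns the tangent of $\g$ by $-\tfrac\pi2$ in the first case and by $+\tfrac\pi2$ in the second, so $\cW^w=\cW^u+\tfrac\pi2$ on the first sub-event and $\cW^w=\cW^u-\tfrac\pi2$ on the second, i.e.\ $\exp(\tfrac i2\cW^w)$ equals $e^{i\pi/4}\exp(\tfrac i2\cW^u)$ and $e^{-i\pi/4}\exp(\tfrac i2\cW^u)$ respectively. Substituting into \eqref{Phi-w-proj} and recognising the two resulting terms through \eqref{Phi-u-proj} yields
\[
F_\d^\up(w)=\tfrac1{\sqrt2}\bigl(e^{i\pi/4}F_\d^\lt(u)+e^{-i\pi/4}F_\d^\rt(u)\bigr),
\]
whose right-hand side is $F_\d^\up(u)$ by the first line of \eqref{turn-eq-2}. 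The mirror computation (completing half-edge now pointing left, tangent turning by $+\tfrac\pi2$ from below and $-\tfrac\pi2$ from above) gives $F_\d^\dn(u-\sfrac\d2)=\tfrac1{\sqrt2}\bigl(e^{-i\pi/4}F_\d^\lt(u)+e^{i\pi/4}F_\d^\rt(u)\bigr)$, which equals $F_\d^\dn(u)$ by the second line of \eqref{turn-eq-2}.

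The step that needs genuine care is the winding bookkeeping: one must verify that crossing a completing half-edge changes the winding angle $\cW$ by exactly $\pm\tfrac\pi2$, with the signs as stated, in each of the four sub-cases (completing half-edge pointing left or right; $\g$ arriving at $u$ from below or from above), and that on $\cA(a_\d,u)$ the completed path to $w$ really is the path to $u$ followed by that half-edge — so that the orthogonal splitting $F_\d(w)=F_\d^\up(w)+F_\d^\dn(w)$ matches the disjoint splitting $\cA(a_\d,w)=\cA(a_\d,u)\sqcup\cA(a_\d,u+\d)$. Everything else is a direct substitution into \eqref{Phi-u-proj}--\eqref{Phi-w-proj} and \eqref{turn-eq-2}.
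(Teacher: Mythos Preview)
Your proof is correct and takes essentially the same approach as the paper: both arguments pass between $F_\d^\up(u)$ and $F_\d^\up(w)$ (with $w=u+\sfrac\d2$) by combining the projection formulas \eqref{Phi-u-proj}--\eqref{Phi-w-proj} with the winding shift $\pm\sfrac\pi2$ and the factor $\sqrt2$ from adding or removing the half-edge, then invoke \eqref{turn-eq-2}. The only cosmetic difference is direction---you start from $F_\d^\up(w)$ and arrive at the linear combination on the right of \eqref{turn-eq-2}, whereas the paper starts from $F_\d^\up(u)$ via \eqref{turn-eq-2} and rewrites each term to recover $F_\d^\up(w)$---together with your explicit reduction observing that the white and black conditions are the same edge relations.
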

\begin{proof}
We give details for the case  $z=w\in\Dom_\d^{\w,\itr}$, the case
$z\in\Dom_\d^{\b,\itr}$ being similar. 
Writing $u=w-\sfrac\d2$, $v=w+\sfrac\d2$, we need to show that
(when neither $u$ nor $v$ equals $a_\d^\itr$)
\[ 
F_\d^\up(w)=F_\d^\up(u) \mbox{ and }
F_\d^\dn(w) =F_\d^\dn(v).
\]
We give details for the case $\up$ only, the claim for $\dn$ again
being similar. 

Consider the terms in~\eqref{Phi-u-proj}.
Inside the expectations we have
\be \label{Phi-u-wind-1}
\mbox{in }F_\d^\lt(u),\quad
X^{u}(\xi)={\sqrt2} X^{w}(\xi)
\mbox{ and }
W^{a_\d,u}_{\g(\xi)}=W^{a_\d,w}_{\g(\xi)}+\tfrac\pi2,
\ee
since if we add the half-edge from $u$ to $w$,
this puts an additional factor 
$\sfrac1{\sqrt2}$ into $X$, and $\g$ does an additional 
$-\sfrac\pi2$ turn.  Similarly,
\be \label{Phi-u-wind-2}
\mbox{in }F_\d^\rt(u),\quad
X^{u}(\xi)={\sqrt2} X^{w}(\xi)
\mbox{ and }
W^{a_\d,u}_{\g(\xi)}=W^{a_\d,w}_{\g(\xi)}-\tfrac\pi2.
\ee
We use the symbolic notation
\[
\Big\{\uprtw\Big\} \qquad\mbox{and}\qquad
\Big\{\dnrtw\Big\}
\]
for the events that $\g$ ends with a right- or left-turn at $u$ into
$w$, respectively.  Using~\eqref{turn-eq-2}, \eqref{Phi-u-proj},
\eqref{Phi-u-wind-1} and \eqref{Phi-u-wind-2}, we have for the case
when neither $u$ nor $v$ equals $a_\d^\itr$:
\be
\begin{split}
F_\d^\up(u)&=\tfrac{1}{\sqrt2}\big(
e^{i\pi/4}F_\d^\lt(u)+e^{-i\pi/4}F_\d^\rt(u)\big)\\
&=\tfrac1{\sqrt2}\EE\Big[
\exp\big(\tfrac{i}2(\cW^{u}+\tfrac\pi2)\big)
X^{u} \one\Big\{\upu\Big\}
\Big]\\
&\qquad+\tfrac1{\sqrt2}\EE\Big[
\exp\big(\tfrac{i}2(\cW^{u}-\tfrac\pi2)\big)
X^{u} \one\Big\{\dnu\Big\}
\Big]\\
&=\EE\Big[
\exp\big(\tfrac{i}2\cW^{w}\big) X^{w} 
\one\Big\{\uprtw\Big\}
\Big]  \\ &\qquad
+\EE\Big[
\exp\big(\tfrac{i}2\cW^{w}\big) X^{w}
\one\Big\{\dnrtw\Big\}
\Big]\\
&=\EE\Big[
\exp\big(\tfrac{i}2\cW^{w}\big) X^{w}
\one\big\{\rtw\big\}
\Big]\\
&=F_\d^\up(w),\mbox{ as required.}\qedhere
\end{split}
\ee
\end{proof}

The remaining conditions for s-holomorphicity take more work to
verify. 
Theorem~\ref{s-hol-thm-spin} follows once we establish the following:

\begin{proposition}
Conditions~\eqref{s-hol-def-w-2} and~\eqref{s-hol-def-u-2} in
Definition~\ref{s-hol-def} hold at all
$z\in(\Dom_\d^{\b,\itr}\cup\Dom_\d^{\w,\itr})\sm\{a_\d^\itr,a_\d^\itr\pm\sfrac\d2\}$.
\end{proposition}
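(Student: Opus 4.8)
The plan is to mirror the proof of Theorem~\ref{s-hol-thm-fk}, with \eqref{turn-eq-2} (which holds for free, $F_\d^\a$ being a projection) in the role of Lemma~\ref{turn-lem} and a spin-version of Lemma~\ref{phi-dot-lem} to be proved. First I would record the reduction. By the previous proposition, $F_\d^\up$ is constant on each adjacent pair $\{u,u+\sfrac\d2\}$ with $u$ black and $F_\d^\dn$ on each pair $\{u-\sfrac\d2,u\}$; hence $\dot F_\d^\up$ and $\dot F_\d^\dn$ are likewise constant on these pairs, and the four conditions \eqref{s-hol-def-w-2}--\eqref{s-hol-def-u-2} collapse to the two families, indexed by black $u$ with $u$ and its white neighbours outside the excluded set,
\be\label{sp-goal}
\dot F_\d^\up(u)=\tfrac i\d\big(F_\d^\dn(u+\sfrac\d2)-F_\d^\dn(u)\big),\qquad
\dot F_\d^\dn(u)=\tfrac i\d\big(F_\d^\up(u)-F_\d^\up(u-\sfrac\d2)\big).
\ee

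The core is a spin-analogue of Lemma~\ref{phi-dot-lem}. Starting from the representations $F_\d^\lt(u)=\EE[\exp(\tfrac i2\cW^u)X^u\one\{\g\text{ reaches }u\text{ from below}\}]$ and $F_\d^\rt(u)=\EE[\exp(\tfrac i2\cW^u)X^u\one\{\g\text{ reaches }u\text{ from above}\}]$ of \eqref{Phi-u-proj}, I would differentiate in the vertical direction by the Poisson-process argument used to prove Lemma~\ref{phi-dot-lem}: for fixed $\xi$ the increment of the integrand from $u$ to $u+i\eps$ is, to leading order in $\eps$, a \emph{smooth} term plus \emph{jump} terms carried on the events that a single new bridge of $\xi$ appears in the $\eps$-interval immediately above $u$ (equivalently, at one of the white points $(u\pm\sfrac\d2)+i\eps'$, $0<\eps'<\eps$). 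The smooth term --- absent in the \fk\ case, where the weight $(\sqrt2)^{L(\xi)}$ is locally constant --- comes from the fact that moving the source from $u$ to $u+i\eps$ lengthens (if $\g$ reaches $u$ from below) or shortens (if from above) $I(\psi_{a_\d,u}^\xi)$ by exactly $\eps$, so, since $h=\tfrac1{2\d}$, it contributes $\mp\tfrac1\d F_\d^\lt(u)$, resp.\ $\pm\tfrac1\d F_\d^\rt(u)$. On a jump event the source crosses a bridge endpoint and $\g$ is rerouted through the bridge; the events ``from below''/``from above'', the winding angle $\cW$ and the weight $X$ (with its $(1/\sqrt2)^{1^\w(\cdot)}$ half-edge factors and $\pm\pi/2$ turns, tracked as in \eqref{Phi-u-wind-1}--\eqref{Phi-u-wind-2}) then transform exactly as in the case analysis of Lemma~\ref{turn-lem} and Figures~\ref{prehol_fig-1}--\ref{prehol_fig-3}. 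Collecting the terms, the computation should produce the identity
\be\label{sp-dot}\begin{split}
\dot F_\d^\up(u)&=-\tfrac1{\d\sqrt2}\big(e^{i\pi/4}F_\d^\lt(u)-e^{-i\pi/4}F_\d^\rt(u)\big)+\tfrac1{\d\sqrt2}\big(e^{i\pi/4}F_\d^\lt(u+\sfrac\d2)-e^{-i\pi/4}F_\d^\rt(u+\sfrac\d2)\big),\\
\dot F_\d^\dn(u)&=\tfrac1{\d\sqrt2}\big(e^{i\pi/4}F_\d^\rt(u)-e^{-i\pi/4}F_\d^\lt(u)\big)+\tfrac1{\d\sqrt2}\big(e^{-i\pi/4}F_\d^\lt(u-\sfrac\d2)-e^{i\pi/4}F_\d^\rt(u-\sfrac\d2)\big),
\end{split}\ee
the first bracket in each line being the smooth contribution and the second the net jump contribution; the exact coefficients are in any case forced by the requirement that \eqref{sp-dot} reduce to \eqref{sp-goal} under \eqref{turn-eq-2}.

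Granting \eqref{sp-dot}, the proof finishes exactly as for Theorem~\ref{s-hol-thm-fk}: substituting the relations $e^{i\pi/4}F_\d^\lt(z)-e^{-i\pi/4}F_\d^\rt(z)=i\sqrt2\,F_\d^\dn(z)$ and $e^{i\pi/4}F_\d^\rt(z)-e^{-i\pi/4}F_\d^\lt(z)=i\sqrt2\,F_\d^\up(z)$ (consequences of \eqref{turn-eq-2}) into \eqref{sp-dot} turns its right-hand sides into exactly those of \eqref{sp-goal}.

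The main obstacle is the case analysis underlying \eqref{sp-dot}: one must track, under the local insertion of a bridge, which of ``$\g$ reaches $u$ from below/above'' holds before and after, the induced jumps of $\cW$ and of $|I(\psi)|$, and the appearance or disappearance of the $(1/\sqrt2)^{1^\w(\cdot)}$ half-edge factors as $\g$'s terminal segment switches between a full edge and a half-edge; and one must set up a bijection between configurations of $\cA(a_\d,u)$ with and without the inserted bridge, analogous to the equivalences $\xi\in A\Leftrightarrow\xi_z\in A'$ of Lemma~\ref{turn-lem}, so that the contribution of a new bridge at the white neighbour not appearing in \eqref{sp-goal} recombines correctly with the smooth term. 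One must also treat separately the configurations where $u$ or one of $u\pm\sfrac\d2$ lies on $\partial\Dom_\d$, or is adjacent to $a_\d^\itr$. Finally, the exclusion of $a_\d^\itr$ and of $a_\d^\itr\pm\sfrac\d2$ is exactly because there $\g$ is degenerate --- a short arrow into $a_\d^\itr$, or a $\pi$-turn at $a_\d$ --- so the winding-shift relations used above break down; this matches the remark after Theorem~\ref{s-hol-thm-spin} that \eqref{s-hol-def-u-2} genuinely fails at $a_\d^\itr$.
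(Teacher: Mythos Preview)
Your strategy---Poisson-process differentiation, splitting into a smooth contribution from the change in $|I(\psi)|$ and a jump contribution from bridge insertion---is exactly the paper's. The paper organizes the computation slightly differently, and this difference removes the obstacle you flag.

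Rather than differentiating $F_\d^\lt(u)$ and $F_\d^\rt(u)$ separately and then combining via \eqref{turn-eq-2}, the paper differentiates $F_\d^\up(w)$ directly at the white point $w=u+\sfrac\d2$, using the representation \eqref{Phi-w-proj} with the single event $\{\rtw\}$. With this choice only \emph{one} bridge location matters, namely $w$ itself: the paper checks (this is the content of the two pictures described as ``the winding angle is still the same in both cases'') that a bridge in $(w-\d,w-\d+i\eps)$ does \emph{not} change $\cW^w$, so its effect is confined to $|I(\psi)|$ and is $O(\eps^2)$. Hence there is no ``recombination'' to perform: the bridge at the other white neighbour simply drops out. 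The smooth case then gives exactly your first bracket (rewritten as $-\tfrac{i}{\d}F_\d^\dn(u)$), and the jump at $w$ gives $+\tfrac{i}{\d}F_\d^\dn(v)$.

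The jump analysis itself is not the FK case analysis of Lemma~\ref{turn-lem} and Figures~\ref{prehol_fig-1}--\ref{prehol_fig-3}; those do not transfer. The paper instead establishes the pointwise identity
\[
e^{\sfrac{i}2\cW^{w+i\eps}(\xi)}-e^{\sfrac{i}2\cW^{w}(\xi)}=2i\,e^{\sfrac{i}2\cW^{w}(\xi_{\hat w})}
\]
(where $\hat w$ is the inserted bridge) by a \emph{four}-case analysis (a)--(d), according to whether $\psi_{a_\d,u}^\xi(u+0i)=1$ or $0$ and, in the latter case, in which direction (if at all) $\g$ traverses $\hat w$. This identity, together with the bijection $\xi\in\cA(a_\d,u)\Leftrightarrow\xi_{\hat w}\in\cA(a_\d,v)$, converts the jump term directly into $\tfrac{i}{\d}F_\d^\dn(v)$ via \eqref{Phi-w-proj}. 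Finally, your remark that the coefficients in \eqref{sp-dot} are ``in any case forced'' by \eqref{sp-goal} is not an argument; the content of the proof is precisely to produce them by computation.
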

\begin{proof}
Again we give details only for $z=w\in\Dom_\d^{\w,\itr}$.  Writing
$u=w-\sfrac\d2$, $v=w+\sfrac\d2$, we need to show
(as long as neither $u$ nor $v$ equals $a_\d^\itr$) that
\[
\begin{split}
\dot F_\d^\up(w)=\tfrac i\d 
\big(F_\d^\dn(v)-F_\d^\dn(u)\big)\mbox{ and }
\dot F_\d^\dn(w)=\tfrac i\d 
\big(F_\d^\up(v)-F_\d^\up(u)\big).
\end{split}
\]
We give details only for the case of $\dot F_\d^\up(w)$.
Take $\eps>0$ small, and consider 
$F_\d^\up(w+i\eps)-F_\d^\up(w)$.  Note from~\eqref{Phi-w-proj} that 
\be
F_\d^\up(w)=\tfrac1{\sqrt2}\EE\Big[
\exp\big(\tfrac{i}2\cW^w\big) 
\exp\big(-2h|I(\psi_{a_\d,u})|\big)\one_{\cA(a_\d,u)}
\Big].
\ee
Also note that $\cA(a_\d,u)=\cA(a_\d,u+i\eps)$ for $\eps>0$ small.
We may thus write
\be\label{phi-dot-exp-1}
\begin{split}
&F_\d^\up(w+i\eps)-F_\d^\up(w)\\
&=\tfrac1{\sqrt2} \EE\Big[ 
\Big(e^{\sfrac{i}2\cW^{w+i\eps}}
e^{-2h|I(\psi_{a_\d,u+i\eps})|} - e^{\sfrac{i}2\cW^{w}}
e^{-2h|I(\psi_{a_\d,u})|}\Big) \one_{\cA(a_\d,u)}
\Big].
\end{split}
\ee
We will split the expectation into the two cases:
(i) $\xi(w,w+i\eps)=0$, and
(ii) $\xi(w,w+i\eps)>0$, i.e.\ according to whether there is
a bridge in the interval $(w,w+i\eps)$ or not.

\begin{figure}[hbt]
\centering
\includegraphics{fermionic.16}\hspace{1cm}
\includegraphics{fermionic.17}
\caption{The curve $\g$ finishes with a right turn at $u$ (in
  $\psi^\xi_{a,w}$, displayed left)
  respectively $u+i\eps$ (in $\psi^\xi_{a_\d,w+i\eps}$, displayed right).  
The winding angle is the same in  both cases.}
\label{spin-der-fig-1}
\end{figure}
\begin{figure}[hbt]
\centering
\includegraphics{fermionic.18}\hspace{1cm}
\includegraphics{fermionic.19}
\caption{Here $u+i\eps$ is contained in a loop  (in
  $\psi^\xi_{a_\d,w}$, displayed left), which becomes part of
  $\g$ (in $\psi^\xi_{a_\d,w+i\eps}$, displayed right).  
 The winding angle is still the same in
  both cases.}
\label{spin-der-fig-2}
\end{figure}

The first case, when there is no bridge, 
is illustrated in Figures~\ref{spin-der-fig-1}
and~\ref{spin-der-fig-2}.
In this case we have that
$W^{a_\d,w+i\eps}_\g=W^{a_\d,w}_\g$
and hence $\cW^w=\cW^{w+i\eps}$.
 Let
\be
\hat\eps=|I(\psi_{a_\d,u})|-|I(\psi_{a_\d,u+i\eps})|,
\ee
and note that $-\eps\leq\hat\eps\leq\eps$.
We may thus write the contribution from case (i) to the 
expectation in~\eqref{phi-dot-exp-1} as 
\be\label{phi-dot-exp-2} 
\tfrac1{\sqrt2}\EE\Big[
e^{\sfrac{i}2\cW^w}
e^{-2h|I(\psi_{a_\d,u})|}
\big(e^{2h\hat\eps}-1\big) \one_{\cA(a_\d,u)}
\one\{\xi(w,w+i\eps)=0\}
\Big].
\ee
Since the factor $e^{2h\hat\eps}-1$ is of order $O(\eps)$ we can (up
to an error of order $O(\eps^2)$) ignore events of probability
$O(\eps)$.  Thus we may assume that there is no bridge
in $(w-\d,w-\d+i\eps)$ (i.e.\ we have a situation as in
Figure~\ref{spin-der-fig-1}, not as in
Figure~\ref{spin-der-fig-2}).  Under the latter assumption we have
that  
\be
\hat\eps=\left\{\begin{array}{ll}
+\eps, & \mbox{if $\g$ comes from above},\\
-\eps, & \mbox{if $\g$ comes from below}.
\end{array}\right.
\ee
Thus, up to an error of order $O(\eps^2)$, the integrand
in~\eqref{phi-dot-exp-2} equals
\be\begin{split}
2h\eps\Big( e^{\sfrac{i}2\cW^w}
e^{-2h|I(\psi_{a_\d,u})|}\one\{\dnrtw\}
- e^{\sfrac{i}2\cW^w}
e^{-2h|I(\psi_{a_\d,u})|}\one\{\uprtw\}\Big).
\end{split}\ee 
In the first term we have $W_\g^{a_\d,w}=W_\g^{a_\d,u}+\pi/2$
and in the second  term we have $W_\g^{a_\d,w}=W_\g^{a_\d,u}-\pi/2$.
Dividing by $\eps$ and letting $\eps\downarrow0$,
it follows that the contribution to $\dot F_\d^\up(w)$ from case
(i) is
\be
\begin{split}
& h\sqrt2 (e^{-i\pi/4}F_\d^\rt(u) -e^{i\pi/4}F_\d^\lt(u) )\\
&\quad=-ih\sqrt2 (e^{i\pi/4}F_\d^\rt(u) +e^{-i\pi/4}F_\d^\lt(u) )\\
&\quad=-2ih F_\d^\dn(u)= -\tfrac{i}{\d}F_\d^\dn(u).
\end{split}
\ee

We now turn to case (ii), when there is a bridge in $(w,w+i\eps)$.
We need to show that the contribution
from this case is $\tfrac i\d F_\d^\dn(w)=\tfrac i\d F_\d^\dn(v)$.
We start by noting that, up to an error of order $O(\eps^2)$, we may
in fact assume that $\xi$ belongs to the event
\be\label{B}
B=\left\{\begin{array}{l}
\xi(w,w+i\eps)=1,\\
\xi(w-\d,w-\d+i\eps)=0,\mbox{ and}\\
\xi(w+\d,w+\d+i\eps)=0.
\end{array}\right\}
\ee
The possible scenarios are illustrated in Figures~\ref{case-iia-fig},
\ref{case-iib-fig}, \ref{case-iic-fig} and \ref{case-iid-fig}.
We write $\hat w$ for the location of the unique bridge in
$(w,w+i\eps)$.  Recall the notation 
$\xi_{\hat w}=\xi\sd\{\hat w\}$ for the
configuration obtained by removing
 the bridge at $\hat w$ from $\xi$.
We have that 
\be
\xi\in\cA(a_\d,u) \Leftrightarrow \xi\in\cA(a_\d,u+i\eps) 
\Leftrightarrow \xi_{\hat w}\in\cA(a_\d,v).
\ee
Moreover, we have that the quantities
\be
\begin{split}
\hat\eps_1&=|I(\psi_{a_\d,v}^{\xi_{\hat w}})| 
                         -|I(\psi_{a_\d,u+i\eps}^{\xi})|\\
\hat\eps_2&=|I(\psi_{a_\d,v}^{\xi_{\hat w}})|
                         -|I(\psi_{a_\d,u}^{\xi})|
\end{split}
\ee
satisfy $-2\eps\leq \hat\eps_1,\hat\eps_2\leq 2\eps$.
The contribution from case (ii) to the expectation
in~\eqref{phi-dot-exp-1} may thus, up to an error of order
$O(\eps^2)$,  be written as
\be\label{phi-dot-exp-3}
\begin{split}
&\EE\Big[\one_B(\xi) \one_{\cA(a_\d,v)}(\xi_{\hat w})
X^{w}(\xi_{\hat w})
\Big(e^{\sfrac{i}2\cW^{w+i\eps}(\xi)}
e^{2h\hat \eps_1} - 
e^{\sfrac{i}2\cW^{w}(\xi)}
e^{2h\hat \eps_2} \Big) \Big]\\
&=\EE\Big[\one_B(\xi) \one_{\cA(a_\d,v)}(\xi_{\hat w})
X^{w}(\xi_{\hat w})
\Big(e^{\sfrac{i}2\cW^{w+i\eps}(\xi)}
 - e^{\sfrac{i}2\cW^{w}(\xi)}
\Big) \Big]+O(\eps^2).
\end{split}
\ee
We used that the event $B$ has probability $O(\eps)$ and that both 
$e^{2h\hat\eps_1}=1+O(\eps)$ and $e^{2h\hat\eps_2}=1+O(\eps)$.

It remains to understand the factor
\[
\one_B(\xi) \one_{\cA(a_\d,v)}(\xi_{\hat w})
\big(
e^{\sfrac{i}2\cW^{w+i\eps}(\xi)}
 - e^{\sfrac{i}2\cW^{w}(\xi)}
\big)
\]
We claim that, for  $\xi\in B$ and $\xi_{\hat w}\in\cA(a_\d,v)$,
\be\label{phi-dot-exp-4}
e^{\sfrac{i}2\cW^{w+i\eps}(\xi)}
 - e^{\sfrac{i}2\cW^{w}(\xi)}
=2i\cdot 
e^{\sfrac{i}2\cW^{w}(\xi_{\hat w})}.
\ee
Before showing this, we explain how to finish the proof.  
From~\eqref{phi-dot-exp-3}, and assuming~\eqref{phi-dot-exp-4}, the
contribution to $\dot F_\d^\up(w)$ from case (ii) is
\be\label{phi-dot-exp-5}
\begin{split}
&2i\cdot \lim_{\eps\dn0} \tfrac1\eps
\EE\Big[\one_B(\xi) \one_{\cA(a_\d,v)}(\xi_{\hat w})
X^{w}(\xi_{\hat w})
e^{\sfrac{i}2\cW^{w}(\xi_{\hat w})}
\Big]\\
&=
2i J \EE\Big[
e^{\sfrac{i}2\cW^{w}(\xi)}
X^{w}(\xi) \one_{\cA(a_\d,v)}(\xi)
\Big]\\
&=\tfrac i\d F_\d^\dn(w),
\end{split}
\ee
as required (we used~\eqref{Phi-w-proj}).

It remains to show~\eqref{phi-dot-exp-4}.
There are 4 sub-cases to consider, depending on whether $\g$ traverses
$\hat w$ (in $\psi_{a_\d,u}^\xi$), in which direction, et.c.  
The first case, which we call case (a), is defined by the condition
$\psi_{a_\d,u}^\xi(u+0i)=1$ and is depicted in Figure~\ref{case-iia-fig}.
In this case $\g(\xi)$ necessarily traverses $\hat w$ from right to
left.  

\begin{figure}[hbt]
\centering
\includegraphics{fermionic.20}\hspace{1cm}
\includegraphics{fermionic.21}\hspace{1cm}
\includegraphics{fermionic.22}
\caption{Case (ii)(a), with $\psi^\xi_{a_\d,w}$ to the left,
  $\psi^\xi_{a_\d,w+i\eps}$ in the middle, and 
$\psi^{\xi_{\hat w}}_{a_\d,w}$ to the right.} 
\label{case-iia-fig}
\end{figure} 
It is not hard to see that we get
\be
\mbox{case (a):}\quad
W^{a_\d,w}_{\g(\xi)}= W^{a_\d,w}_{\g(\xi_{\hat w})}+\pi,\quad
W^{a_\d,w+i\eps}_{\g(\xi)}= W^{a_\d,w}_{\g(\xi_{\hat w})}-\pi.
\ee 
This establishes~\eqref{phi-dot-exp-4} for case (a).
In the remaining 3 cases we will have $\psi^\xi_{a_\d,u}(u+0i)=0$,
meaning that (in $\psi^\xi_{a_\d,u}$) $\g$ can traverse $\hat w$
from right to left (case (b)), from left to right (case (c)), or not
at all (case (d)).  The cases are depicted in
Figures~\ref{case-iib-fig}, 
\ref{case-iic-fig} and \ref{case-iid-fig}, 
respectively.  We get the following:
\be
\mbox{case (b):}\quad
W^{a_\d,w}_{\g(\xi)}= W^{a_\d,w}_{\g(\xi_{\hat w})}+\pi,\quad
W^{a_\d,w+i\eps}_{\g(\xi)}= W^{a_\d,w}_{\g(\xi_{\hat w})}-\pi.
\ee 
\be
\mbox{case (c):}\quad
W^{a_\d,w}_{\g(\xi)}= W^{a_\d,w}_{\g(\xi_{\hat w})}-3\pi,\quad
W^{a_\d,w+i\eps}_{\g(\xi)}= W^{a_\d,w}_{\g(\xi_{\hat w})}-\pi.
\ee 
\be
\mbox{case (d):}\quad
W^{a_\d,w}_{\g(\xi)}= W^{a_\d,w}_{\g(\xi_{\hat w})}+\pi,\quad
W^{a_\d,w+i\eps}_{\g(\xi)}= W^{a_\d,w}_{\g(\xi_{\hat w})}+3\pi.
\ee 
In all cases we see that~\eqref{phi-dot-exp-4} holds, as claimed.
\begin{figure}[hbt]
\centering
\includegraphics{fermionic.23}\hspace{1cm}
\includegraphics{fermionic.24}\hspace{1cm}
\includegraphics{fermionic.25}
\caption{Case (ii)(b), with $\psi^\xi_{a_\d,w}$ to the left,
  $\psi^\xi_{a_\d,w+i\eps}$ in the middle, and 
$\psi^{\xi_{\hat w}}_{a_\d,w}$ to the right.} 
\label{case-iib-fig}
\end{figure} 
\begin{figure}[hbt]
\centering
\includegraphics{fermionic.26}\hspace{1cm}
\includegraphics{fermionic.27}\hspace{1cm}
\includegraphics{fermionic.28}
\caption{Case (ii)(c), with $\psi^\xi_{a_\d,w}$ to the left,
  $\psi^\xi_{a_\d,w+i\eps}$ in the middle, and 
$\psi^{\xi_{\hat w}}_{a_\d,w}$ to the right.} 
\label{case-iic-fig}
\end{figure} 
\begin{figure}[hbt]
\centering
\includegraphics{fermionic.29}\hspace{1cm}
\includegraphics{fermionic.30}\hspace{1cm}
\includegraphics{fermionic.31}
\caption{Case (ii)(d), with $\psi^\xi_{a_\d,w}$ to the left,
  $\psi^\xi_{a_\d,w+i\eps}$ in the middle, and 
$\psi^{\xi_{\hat w}}_{a_\d,w}$ to the right.} 
\label{case-iid-fig}
\end{figure} 
\end{proof}

\newpage
\section{Discussion}\label{disc-sec}

\subsection{Convergence of the observables}

As mentioned in the Introduction we expect that
both the \fk- and
spin-observables, suitably rescaled, converge as $\d\to0$.  
We sketch an outline of a possible argument,
following the arguments for the classical case 
(see~\cite{ch-sm,dum-cop,smi-fk}).
As also mentioned, the details in the case of the
\fk-observable were supplied by Li~\cite{li}
shortly after this paper was finished.
We take the discrete domains $(\Dom_\d,a_\d,b_\d)$ to approximate a
continuous 
domain $(\Dom,a,b)$ (e.g.\ in the Carath\'eodory sense, i.e.\  convergence on
compact subsets of suitably normalized conformal maps from the upper 
half-plane into the domains, see~\cite[Definition~3.10]{dum-cop}). 

The two main steps are to show (i) precompactness of
sequences of s-holomorphic
functions $(F_\d)_{\d>0}$, and (ii) convergence of the auxiliary
functions $(H_\d)_{\d>0}$ given in Proposition~\ref{H-prop}.

For (i), note first that
preholomorphic functions, and hence in particular
s-holomorphic functions, are $\D_\d$-harmonic.  Indeed,
if $F_\d$ satisfies~\eqref{prehol} at $z$ and $z\pm\sfrac\d2$, then
differentiating twice using~\eqref{prehol} gives
\be\begin{split}
\ddot F_\d(z)&=\tfrac1{i\d}\big(
\dot F_\d(z-\sfrac\d2)-\dot F_\d(z+\sfrac\d2) \big)\\
&=-\tfrac1{\d^2}
\big(F_\d(z-\d)+F_\d(z+\d)-2F_\d(z)\big).
\end{split}\ee
Thus precompactness of s-holomorphic functions would follow from
Lipschitzness of $\D_\d$-harmonic functions combined with a suitable
boundedness condition, using the Arzela--Ascoli theorem as
in~\cite[Proposition~8.7]{dum-cop}.  
Completing this argument would require estimates
for the Green's function $G_\d(\cdot)$ in $\CC_\d^\b$, 
in particular a suitable form of the asymptotics of
$G_\d(z)$ as $|z|\to\oo$ as in~\cite{kenyon} and~\cite{bucking,ch-sm-cplx}.
See Section~3.4 of Li's paper~\cite{li} for details in the
present context.

For (ii), consider the sub- and superharmonic functions 
$H^\b_\d=H_\d\!\mid_{\Dom_\d^\b}$ and
$H^\w_\d=H_\d\!\mid_{\Dom_\d^\w}$ (see Proposition~\ref{H-prop}).  It is
not hard to partly determine the behaviour of these functions on the
boundary.  In the case of the \fk-observable we can choose the
additive constant so that  $H^\b_\d=1$
on the black part $\partial^\b_\d$ and $H_\d^\w=0$ on the white part
$\partial^\w_\d$.  
In the case of the spin-observable 
the constant  can be chosen so
that $H_\d^\w(w)=0$ for all $w\in\partial\Dom_\d^\w\sm\{a_\d\}$
(note also that $\nu(z)^{1/2} F_\d^\mathrm{sp}(z)\in\RR$ for all
$z\in\partial^\rv\Dom_\d^\w\cup\partial^\rh\Dom_\d^\b$ 
where $\nu(z)$ is the counter-clockwise
oriented unit tangent).

To fully determine the boundary-behaviour one could try to use a
variant of the `boundary modification trick' of~\cite{ch-sm}
(this is the approach taken by Li~\cite{li}).
 In the case of
the \fk-observable one could alternatively note that the difference of
$H_\d$ on the boundary 
and `just inside' the boundary is proportional to a
percolation-probability which converges to zero away from $a_\d,b_\d$,
like in the original argument for the square-lattice case~\cite{smi-fk}
(this uses that the phase-transition is continuous~\cite{bjogr}).  
Having determined the boundary-values of $H^\b_\d$ and $H^\w_\d$ one
would show that these functions are close to the harmonic function $h$
in $\Dom$ 
with the corresponding boundary-values.
In the case of the \fk-observable we have $h=1$ on the clockwise arc
from $a$ to $b$ and $h=0$ on the counter-clockwise arc, whereas for
the spin-observable we have $h=0$ on $\partial\Dom\sm\{a\}$.

Since these are  also the boundary-conditions for the
classical case~\cite{ch-sm,smi-fk}, we expect the 
observables to converge to the same limits under the same
rescaling, namely
\be
\tfrac1{\sqrt{\d}} F_\d^\mathrm{FK}(\cdot)\to \sqrt{\phi'(\cdot)},
\quad
F_\d^\mathrm{sp}(\cdot)\to\sqrt{\tfrac{\psi'(\cdot)}{\psi'(b)}},
\ee
where $\phi$ is a conformal map from $\Dom$ to
$\RR+i(0,1)$ mapping $a$ to $-\oo$ and $b$ to $+\oo$,
and $\psi$ is a conformal map from $\Dom$ to  the upper half-plane
mapping $a$ to $\oo$ and $b$ to 0.
As mentioned, the first of these limits has now
been established by Li~\cite{li}.

\subsection{Parafermionic observables}

Recall from~\eqref{fk-dens} that the \fk--Ising model at the critical
parameters $h=J=1/2\d$ has density proportional to $(\sqrt2)^{L(\xi)}$
with respect to a Poisson law, where $L(\xi)$ is
the number of loops.  It is natural to ask also about measures with
density $(\sqrt q)^{L(\xi)}$ for other $q>0$.  
Such measures arise in
the Aizenman--Nachtergaele representation~\cite{aizenman_nacht} 
of a class of quantum spin
systems which includes the (spin-$\tfrac12$) Heisenberg
antiferromagnet as the case $q=4$.  One may define an analog of the
\fk--Ising observable~\eqref{fk-obs-eq} 
which is also a direct analog of Smirnov's parafermionic
observable for critical random-cluster models~\cite{smi-fk}.  We
briefly describe this now.

Let $(\Dom_\d,a_\d,b_\d)$ be a Dobrushin-domain as in
Section~\ref{fk-obs-sec} and let $\s$ satisfy
$\sin(\s\tfrac\pi2)=\tfrac12\sqrt q$.
Thus $\s=\tfrac12$ for $q=2$ (\tfim) and $\s=1$ for $q=4$ (Heisenberg
model).  
Recall the events
$\G^\a_z=\{\g\mbox{ passes $z$ in direction }\a\}$
and the winding-angle $W^\a_\g(z)$ of the interface 
to the exit.  We now define
\be
\varphi^\a(\xi; z)=\one_{\G^\a_z}(\xi)\exp(i\s W^\a_{\g(\xi)}(z)).
\ee
Let $\hat\EE_\d$ denote the measure with density proportional to 
$(\sqrt q)^{L(\xi)}$ with respect to the Poisson law with rate 
$\tfrac1{\d\sqrt q}$.  
Similarly to before we define observables
\[
\Phi_\d^\up(z)=\hat\EE_\d[\varphi^\up(\xi;z)],\quad
\Phi_\d^\dn(z)=\hat\EE_\d[\varphi^\dn(\xi;z)],
\]
as well as
$F_\d(z)=\Phi_\d^\up(z)+\Phi_\d^\dn(z)$.
Some properties of these quantities are immediate, e.g.\ for
$w\in\Dom_\d^{\w,\itr}$ we still have
$\Phi^\up_\d(w)=\Phi_\d^\up(w-\sfrac\d2)$ and
$\Phi^\dn_\d(w)=\Phi_\d^\dn(w+\sfrac\d2)$, and also a version of
Lemma~\ref{phi-dot-lem} holds.  It might be interesting to investigate
these observables further, especially due to the connection with the
Heisenberg antiferromagnet.

\subsection*{Acknowledgement}
This work was mainly carried out while the author
was at the University of Copenhagen in Denmark.
The author is now supported by
Vetenskapsr{\aa}det grant 2015-05195.


\end{document}